\newcommand{\dd}{\mathrm d}  
\newcommand{\TV}{\mathrm{TV}}
\newcommand{\ROF}{\mathrm{ROF}}
\newtheorem{thm}{Theorem}[section]
\theoremstyle{definition}
\theoremstyle{remark}
\newtheorem{remark}[thm]{Remark}
\newtheorem{proposition}{Proposition}
\newtheorem{example}{Example}
\author{Wojciech G{\' o}rny\thanks{Wojciech G{\' o}rny is with the Faculty of Mathematics, Universit\"at Wien, Oskar-Morgerstern-Platz 1, 1090 Vienna, Austria; and Faculty of Mathematics, Informatics and Mechanics, University of Warsaw, Banacha 2, 02-097 Warsaw, Poland (e-mail:   wojciech.gorny@univie.ac.at).}\quad 
Micha{\l} {\L}asica\thanks{Micha{\l} {\L}asica is with the Institute of Mathematics of the Polish Academy of Sciences, {\'S}niadeckich 8, 00-656 Warsaw, Poland (e-mail: mlasica@impan.pl).}\quad
Alexandros Matsoukas\thanks{Alexandros Matsoukas is with the Department of Mathematics, School of Applied Mathematical and Physical Sciences, National Technical 
University of Athens,  Zografou Campus, 157 80 Athens, Greece (e-mail:  alexmatsoukas@mail.ntua.gr).}}
\title{Adaptive double-phase Rudin--Osher--Fatemi denoising model}
\newcommand{\namelabel}[1]{%
  \phantomsection
  \renewcommand{\@currentlabel}{#1}
  \label{#1}
}
\newcommand{\labeltext}[2]{%
  \@bsphack
  \csname phantomsection\endcsname 
  \def\@currentlabel{#1}{\label{#2}}%
  \@esphack
}
\begin{document}

\maketitle

\begin{abstract}
Even though more than 30 years have passed since the seminal Rudin--Osher--Fatemi (ROF) paper on total variation (TV) denoising, it remains relevant, in particular in scientific applications such as astronomical imaging. However, it is known to suffer from artifacts such as the staircasing effect. Many variants of the model have been proposed with the aim of countering this. Recently, against the backdrop of immense research output on double-phase problems in the mathematical analysis community, a double-phase type integral functional, comprising of TV and a weighted term of quadratic growth, was suggested as a regularizer for image restoration. 

Here, we propose an adaptive variant of the ROF denoising model based on that regularizer. It is designed to reduce staircasing with respect to the classical ROF model, while preserving the edges of the image in a similar fashion. We implement the model and test its performance on synthetic and natural images over a range of noise levels. Compared to {established} models {with similar interpretability to ROF}, we observe an improved or similar performance in terms of similarity metrics SSIM, PSNR, {and LPIPS}, while the staircasing effect is visibly reduced. 
\end{abstract}

\paragraph{Keywords}

total variation, image denoising, ROF model, variable growth, double phase

\section{Introduction}
As argued in \cite{LebrunColom}, all photographic images contain noise. Generally speaking, an image $g$ can be represented as a sum 
\[g = u_0 + n,\] 
where $u_0$ is an ideal underlying image, and $n$ is the noise. Thus, noise removal, or \emph{denoising}, is a basic task in image processing. In \cite{ROF}, a denoising model amounting to finding the minimizer of the problem 
\begin{equation} 
\label{rof0}  \min_{u \colon \int|u-g|^2 = d^2} \int |\nabla u|\, \dd x
\end{equation} 
was proposed. The parameter $d>0$ in \eqref{rof0} needs to be suitably chosen based on the image $g$. As observed in \cite{ChambolleLions}, \eqref{rof0} is equivalent to the unconstrained problem (with parameter $\lambda > 0$)
\begin{equation}\label{eq:rof}  \min_{u}\  \int |\nabla u|\, \dd x + \frac{1}{2\lambda}\int|u-g|^2 \,\dd x. 
\end{equation} 
Now often referred to as the ROF model, it is a cornerstone of the field of \emph{variational denoising}. Its fundamental feature is the capability of removing noise while preserving sharp contours (edges) of objects. Regardless of the development of more sophisticated denoising methods, it remains relevant due to its conceptual transparency, {interpretability}, and computational amenability, in particular in scientific applications such as astronomical imaging. In fact it can outperform many other classical models \cite{Roscani20}. On the other hand, when combined in various ways with neural-network based methods, it was shown to improve their performance in many settings \cite{WangQiao23, WangSun24, Mirzoyan25, ZouShoushtari24pre}. However, it also has drawbacks, notably \emph{staircasing}, i.e., emergence of spurious small-scale piecewise constant structure from the noise. Several versions of the ROF model have been proposed in order to alleviate this effect. In those variants, typically the total variation regularizer $TV(v) = \int |\nabla u|\, \dd x$ is replaced with a different one of form $\int \varphi(x, \nabla u) \, \dd x$, aiming to maintain the edge-preserving feature of the ROF model, while adjusting the behavior in the bulk. Mathematically, this corresponds to keeping linear growth of $\varphi(x, \cdot)$ in the regions where the edges are expected, while customizing it elsewhere. Perhaps the simplest variant of this kind (still without direct dependence of $\varphi$ on $x$) is the so-called Huber-TV regularizer \cite{ChambolleLions} with $\varphi(x, \nabla u) = |\nabla u|_\alpha$, $\alpha >0$, where
\begin{equation} \label{huber_func} 
|\xi|_\alpha =
\begin{cases} 
\frac{1}{2\alpha} |\xi|^2 & \text{if } |\xi| \leq \alpha; \\
|\xi| - \frac{\alpha}{2} & \text{if } |\xi| > \alpha.
\end{cases}
\end{equation}
The idea was further elaborated upon by considering variable exponent regularizers $\varphi(x, \nabla u) \sim |\nabla u|^{p(x)}$ in \cite{CLR, BCESV, LiLiPi}. The dependence on $x$ allows customizing the behavior of the regularizer depending on local features of the image, thus allowing for improvement of performance of such regularizers over Huber-TV. Another approach is to use a higher-order regularizer such as the total generalized variation \cite{BKP2010}, \cite{BDRZ2025}.

From a mathematician's point of view, the quantity $|\nabla u|^{p(x)}$ is an example of a \emph{variable growth} integrand. Another typical example in this class is the \emph{double phase} integrand of form $|\nabla u|^p + w(x) |\nabla u|^q$. Mathematical analysis of variational problems involving such integrands has been a hot topic in recent years. In \cite{HarjulehtoHasto2021}, a double phase integrand with $p=1$, $q=2$: 
\begin{equation} \label{dpint}
\varphi(x,\nabla u) = |\nabla u| + w(x) |\nabla u|^2
\end{equation} 
was proposed as a potential regularizer for image restoration. {The qualitative behavior of \eqref{dpint} is significantly different from that of the variable exponent integrand. It also has the advantage that the resolvent is given by a simple, explicit formula. Moreover, it behaves nicely under rescaling of the image range. This is in contrast with $|\nabla u|^{p(x)}$ which becomes weighted after rescaling (hence, its behavior depends on the particular choice of the image range). However, to our knowledge, the performance of the integrand \eqref{dpint} as a regularizer has not been evaluated experimentally. } In the present paper, we introduce an adaptive denoising model based on the double phase regularizer and systematically test its effectiveness. 

The weight $w(x)$ in \eqref{dpint} plays a similar role to the function $p(x)$ in the variable exponent integrand, deciding whether $\varphi(x, \cdot)$ behaves as a function of linear or superlinear growth at the given point $x$. It needs to be chosen based on the datum $g$ in order to reflect the expected location of edges in the original image $u_0$. In \cite{CLR, BCESV, LiLiPi}, $p(x)$ was chosen to be of the form $P(|\nabla \rho_r * g|)$, where $\rho_r$ is a mollifying kernel and $P$ is a suitable (non-increasing) function with range between~$1$ and~$2$. To improve on this idea, we propose here to choose 
\begin{equation} \label{weight}
 w(x) = W(|\nabla \rho_r * u_{\ROF}|),
\end{equation} 
where $u_{\ROF}$ is the minimizer of the standard ROF model \eqref{eq:rof}, and $W$ may now take values in $[0, \infty[$. Our heuristic justification for this choice is that the ROF model is known to be much more precise at finding the edges of the underlying image than simple mollification of {$g$}. The mollifcation kernel in \eqref{weight} is to be taken with a very small radius (such as $2$ pixels)---its role is now not to suppress noise, but rather to create a narrow strip around the edges where the regularizer will have linear growth. Then, we chose $W$ as a suitable cutoff-type function (taking large values near $0$ and vanishing for larger arguments) in order to roughly distinguish between "real" edges and the staircase edges created from the noise.

\section{Adaptive double-phase ROF model}

In order to reduce the staircasing effect of the classical ROF model, we propose the following approach. 
Given a noisy image $g\colon  \Omega \to [0,1]$ and a positive number {$\lambda_0$}, we first solve the classical ROF minimization problem
\begin{equation}
\min_u \int_{\Omega} |\nabla u| \, \dd x + \frac{1}{2{\lambda_0}} \int_\Omega |u-g|^2 \,\dd x .
\end{equation}  
The minimized functional is strictly convex, lower semicontinuous and coercive, and the minimization problem yields a unique solution $u_{\rm ROF}$. From the solution of the classical ROF problem, we construct a nonnegative weight in the following way. Let $W\colon [0,\infty) \to [0,\infty)$ be a nonincreasing function with compact support and $W(0) > 0$. Denote by $\tilde{u}_{\rm ROF}$ the convolution of the solution $u_{\rm ROF}$ with a mollifying kernel $\varrho_r$. For simplicity, we choose $\varrho_r$ to be a uniform density on the ball $B_r$  with a given radius $r > 0$. We then set
\begin{equation}\label{eq:weight}
w(x) = W(|\nabla \tilde{u}_{\rm ROF}|(x)).
\end{equation}
Having constructed the weight, we consider the double-phase ROF minimization problem
\begin{equation}\label{eq:dprof}
\min_u \int_{\Omega} |\nabla u| \, \dd x + \int_\Omega w(x) |\nabla u|^2 \, \dd x
+ \frac{1}{2\lambda} \int_\Omega |u-g|^2 \,\dd x.
\end{equation} 
The well-posedness of this type of problems was considered e.g. in \cite{HarjulehtoHasto2021} and the corresponding Euler--Lagrange equation in the general case of the variable-growth total variation $\TV_\varphi$ was identified in \cite{GornyLasicaMatsoukas2025}. The functional appearing in \eqref{eq:dprof} is also strictly convex, lower semicontinuous and coercive, and thus admits a unique solution which we denote $u_{\rm dpROF}$. Solving in the first step the classical ROF problem serves essentially as an edge detection method: the leading idea behind the above scheme is that in the areas close to the jumps in the original image, we want to process the image using the $1$-Laplacian operator in order to ensure that the edges are well-preserved, while far from the jumps of the original image we add a Laplacian term with a large weight to reduce staircasing. We summarize the described algorithm below.

\begin{algorithm}\label{alg:adaptivedoublephase}
\caption{The adaptive double-phase ROF model}
\begin{algorithmic}[1]
\STATE Fix a noisy image $g$ and choose a parameter $\lambda_0 > 0$ suitable for the classical ROF minimization problem \eqref{eq:rof}.

\STATE Find a solution $u_{\rm ROF}$ of the classical ROF minimization problem for the datum ${g}$ and~the parameter $\lambda_0$.

\STATE Choose a mollification radius $r > 0$ and a non-increasing function $W$ with compact support and $W(0) > 0$.

\STATE Mollify the function $u_{\rm ROF}$ with a kernel of radius $r$; denote the result by $\tilde{u}_{\rm ROF}$.

\STATE Compute the weight $w$ using the formula \eqref{eq:weight} from the mollified function $\tilde{u}_{\rm ROF}$.

\STATE {Choose a parameter $\lambda >0$}. Find a solution $u_{\rm dpROF}$ of the double-phase ROF minimization problem \eqref{eq:dprof} for the datum $g$, the parameter $\lambda$ and the weight $w$.
\end{algorithmic}
\end{algorithm}

Let us note that there are several free parameters in the above algorithm: coefficients {$\lambda_0$ and} $\lambda$, mollification radius $r$ and weight $W$. {The simple choice $\lambda_0 = \lambda$ gives satisfactory results, but it can be improved upon (see Table \ref{tab:dprof_par}).} The coefficient $\lambda$ (as in the classical ROF model \eqref{eq:rof}) is used to balance the regularization effect of the (generalized) total variation term and the accuracy of the reconstruction ensured by the fidelity term; too large values of $\lambda$ result in flattening of the image, while too small values of $\lambda$ result in small reduction of noise. The mollification radius $r$ should be small 
and ensures that the support of the weight is away from the (sufficiently large) jumps of $u_{\rm ROF}$. {In preliminary tests, the value of 2 pixels turned out optimal in most cases, and we will always use it in experiments with natural raster images.} 

The last free parameter is the choice of the weight $W$. It needs to vanish {for large arguments} 
so that the support of the resulting weight $w$ is away from the (sufficiently large) jumps of $u_{\rm ROF}$. The value $W(0)$ and the shape of the function 
determine how strong is the effect of the regularization with the Laplacian with respect to the $1$-Laplacian regularization away from the jumps of $u_{\rm ROF}$. {We investigated different choices of~$W$ (see github repository \cite{github} for detailed report on this and other experiments), and concluded that the performance of the model does not change much depending on its shape, and the piecewise linear function with two free parameters $a, b >0$ given by 
\[W(x) = \max \left(0, a - b \max \left(x, a/(2b) \right) \right)\]
gives good results and has enough flexibility. This is the choice we use in experiments throughout the paper. }

\begin{remark}
We note that an adaptive denoising model based on a regularizer of form similar to \eqref{dpint} was proposed in \cite{LiuTanSu}. In that paper, apparently a weight of form $W(|\nabla f|)$ is used. The performance of the proposed model is compared against several denoising models, and numbers showing superiority of the proposed model are given. However, as the authors give no details concerning the parameters etc.\ chosen for the comparison models, it is difficult to reproduce their results. Moreover, the weight they consider is non-vanishing, so their regularizer is not a double-phase (variable growth) integral.
\end{remark}

\section{Numerical preliminaries}\label{sec:chambollepock}

In this Section we collect some necessary information concerning the Chambolle--Pock algorithm which will be used in the numerical implementation of the adaptive double-phase ROF model. The Chambolle--Pock algorithm, first introduced in \cite{ChambollePock}, is a first order primal-dual algorithm for solving (possibly non-smooth) convex optimization problems. In the simplest form, it can be shortly described as follows. Let $X$ and $Y$ be finite-dimensional real vector spaces and let $K\colon X \rightarrow Y$ be a bounded linear operator. Then, for proper, convex and lower semicontinuous functionals $F\colon Y \rightarrow [0,\infty]$ and $G\colon X \rightarrow [0,\infty]$, we consider  a minimization problem of the form
\begin{equation}
\min_{x \in X} F(Kx) + G(x).
\end{equation}
A primal-dual formulation of this problem takes the form
\begin{equation}\label{eq:chambollepockprimaldual}
\min_{x \in X} \max_{y \in Y} \, \langle Kx, y \rangle_Y + G(x) - F^*(y),
\end{equation}
where $\langle \cdot, \cdot \rangle$ denotes the scalar product and $F^*$ is the convex conjugate of $F$. The Chambolle--Pock algorithm which recovers a solution $(\hat{x},\hat{y}) \in X \times Y$ of the primal-dual formulation, and consequently the solution $\hat{x} \in X$ of the original minimisation problem, is the following.

\begin{algorithm}
\caption{The Chambolle--Pock algorithm}
\begin{algorithmic}[1]
\STATE Initialization: Fix $\tau, \sigma > 0$, $\theta \in [0,1]$ and initial data $(x^0,y^0) \in X \times Y$. Set $\bar{x}^0 = x^0$.

\STATE Iterations: For $n \geq 0$, update the variables $x^n, y^n, \bar{x}^n$ as follows:
\begin{equation}
y^{n+1} = (I + \sigma \partial F^*)^{-1} (y^n + \sigma K \bar{x}^n);
\end{equation}
\begin{equation}
x^{n+1} = (I + \tau \partial G)^{-1} (x^n - \tau K^* y^{n+1});
\end{equation}
\begin{equation}
\bar{x}^{n+1} = x^{n+1} + \theta(x^{n+1} - x^n).
\end{equation}
\end{algorithmic}
\end{algorithm}

The convergence of the algorithm (with the convergence rate $O(1/N)$) is guaranteed by the following result \cite{ChambollePock}.

\begin{thm}\label{thm:cpconvergence}
Denote $L = \| K \|$. Let $\theta = 1$. If the saddle-point problem \eqref{eq:chambollepockprimaldual} has at least one solution and $\tau \sigma L^2 < 1$, then the sequence $\{ (x^n,y^n) \}$ is bounded and there exists a saddle-point $(x^*,y^*)$ such that $x^n \rightarrow x^*$ and $y^n \rightarrow y^*$.
\end{thm}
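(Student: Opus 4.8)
The plan is to run the classical primal--dual energy argument, deriving a one-step estimate and iterating it. First I would record the optimality conditions encoded in the two resolvent steps. By definition of the resolvent of a maximal monotone operator, $\tfrac{1}{\sigma}(y^n - y^{n+1}) + K\bar x^n \in \partial F^*(y^{n+1})$ and $\tfrac{1}{\tau}(x^n - x^{n+1}) - K^* y^{n+1} \in \partial G(x^{n+1})$. Applying the subgradient inequality for the convex functionals $F^*$ and $G$ then gives, for every $(x,y)\in X\times Y$,
\[
F^*(y) \ge F^*(y^{n+1}) + \tfrac{1}{\sigma}\langle y^n - y^{n+1},\, y - y^{n+1}\rangle + \langle K\bar x^n,\, y - y^{n+1}\rangle,
\]
\[
G(x) \ge G(x^{n+1}) + \tfrac{1}{\tau}\langle x^n - x^{n+1},\, x - x^{n+1}\rangle - \langle K^* y^{n+1},\, x - x^{n+1}\rangle .
\]

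Next I would invoke the three-term identity $\langle a-b,\, b-c\rangle = \tfrac12\|a-c\|^2 - \tfrac12\|a-b\|^2 - \tfrac12\|b-c\|^2$ on the inner products $\langle x^n-x^{n+1}, x-x^{n+1}\rangle$ and $\langle y^n-y^{n+1}, y-y^{n+1}\rangle$, turning them into differences of squared distances (which will telescope) together with the penalty terms $\tfrac{1}{2\tau}\|x^{n+1}-x^n\|^2$ and $\tfrac{1}{2\sigma}\|y^{n+1}-y^n\|^2$. Adding the two inequalities and inserting the extrapolation rule $\bar x^n = 2x^n - x^{n-1}$ (which is $\theta=1$), the bilinear terms reorganize into a partial primal--dual gap $\langle K(x^{n+1}-x), y^{n+1}-y\rangle$ plus a cross term of the form $\langle K(x^n-x^{n-1}), y^{n+1}-y^n\rangle$. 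I would bound the latter by $L\,\|x^n-x^{n-1}\|\,\|y^{n+1}-y^n\|$ and split it with Young's inequality; the hypothesis $\tau\sigma L^2 < 1$ is precisely what allows this quantity to be absorbed into the penalty terms while leaving a strictly positive remainder. The outcome is a one-step inequality controlling $\tfrac{1}{2\tau}\|x^{n+1}-x\|^2 + \tfrac{1}{2\sigma}\|y^{n+1}-y\|^2$ plus nonnegative quantities by the corresponding expression at step $n$.

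Finally I would sum this one-step inequality from $n=0$ to $N-1$, taking $(x,y)=(x^*,y^*)$ to be a saddle point (which exists by assumption); since the partial gap is nonnegative at a saddle point, the telescoped distances yield boundedness of $\{(x^n,y^n)\}$ together with $\sum_n\|x^{n+1}-x^n\|^2 < \infty$ and $\sum_n\|y^{n+1}-y^n\|^2 < \infty$, whence $x^{n+1}-x^n\to 0$ and $y^{n+1}-y^n\to 0$. As $X\times Y$ is finite-dimensional, a subsequence $(x^{n_k},y^{n_k})$ converges to some $(\hat x,\hat y)$; passing to the limit in the two subgradient inclusions above—using $\bar x^{n_k}-x^{n_k+1}\to 0$ and the closedness of the graphs of $\partial F^*$ and $\partial G$—shows that $(\hat x,\hat y)$ is itself a saddle point. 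The main obstacle is upgrading this subsequential convergence to convergence of the whole sequence: for that I would apply the one-step estimate once more with $(x,y)=(\hat x,\hat y)$ to see that $\Delta_n := \tfrac{1}{2\tau}\|x^n-\hat x\|^2 + \tfrac{1}{2\sigma}\|y^n-\hat y\|^2$ (suitably corrected by an $\|x^n-x^{n-1}\|^2$ term) is non-increasing; since $\Delta_{n_k}\to 0$ and $\{\Delta_n\}$ is monotone, $\Delta_n\to 0$, i.e. $x^n\to\hat x=:x^*$ and $y^n\to\hat y=:y^*$. This is the Fej\'er-monotonicity (Opial) step, and the bookkeeping of the correction terms there is the only genuinely delicate point.
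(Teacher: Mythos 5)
The paper does not prove this theorem at all --- it is quoted verbatim from the cited reference of Chambolle and Pock, so there is no in-paper proof to compare against. Your sketch is a correct and faithful reproduction of the argument in that reference (resolvent inclusions, the three-point identity, absorbing the cross term $\langle K(x^n-x^{n-1}),y^{n+1}-y^n\rangle$ via Young's inequality under $\tau\sigma L^2<1$, telescoping at a saddle point, subsequential convergence in finite dimensions, and the Fej\'er-type upgrade --- where, as you rightly flag, the corrected quantity $\Delta_n$ is only non-increasing up to summable error terms rather than exactly monotone).
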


The case $\theta = 0$ corresponds to the standard Arrow--Hurwicz algorithm \cite{ArrowHurwicz} and convergence results are also well-known. Given the parameters such that the assumptions of Theorem \ref{thm:cpconvergence}, the main difficulty in applying the algorithm in a particular setting lies in finding the proximal operators $(I + \sigma \partial F^*)^{-1}$ and $(I + \tau \partial G)^{-1} (x^n - \tau K^* y^{n+1})$. Below we present two applications of this approach to problems in image processing and the precise form of the proximal operators.


\begin{example}\label{example:main}
We represent an image as a function on a regular Cartesian grid of size $M \times N$, i.e.,
\begin{equation}
\big\{ (ih,jh)\colon  1 \leq i \leq M, \  1 \leq j \leq N \big\}.
\end{equation}
The parameter $h$ denotes the spacing of the grid. The (finite dimensional) vector spaces are $X = \mathbb{R}^{MN}$ and $Y = X \times X
$ equipped with the standard scalar product. The linear operator $K\colon  X \rightarrow Y$ is the (discrete) gradient: we set $K = \nabla$ where $\nabla$ is defined for $u \in X$ by
\begin{equation}\label{eq:definitionofgradient}
(\nabla u)_{i,j} = ((\nabla u)_{i,j}^1, (\nabla u)_{i,j}^2)^T 
\end{equation}
with
\begin{equation}
(\nabla u)_{i,j}^1 = \frac{u_{i+1,j} - u_{i,j}}{h}, \quad (\nabla u)_{i,j}^2 = \frac{u_{i,j+1} - u_{i,j}}{h}
\end{equation}
with zeros for $i = M$ and $j = N$ respectively. With this definition, the gradient is a continuous linear operator with norm $\| \nabla \| \leq \frac{8}{h^2}$, see \cite{ChambollePock}. The dual $K^*$ is the minus divergence operator.

Let us now recall how to present the classical ROF model \eqref{eq:rof} in the framework of the Chambolle--Pock algorithm (see \cite{ChambollePock}). 
It corresponds to the following choices: 
the functional $F\colon Y \rightarrow \mathbb{R}$ is given by $F(p) =  \| p \|_1$, and the functional $G\colon X \rightarrow \mathbb{R}$ is defined as~$G(u) = \frac{1}{2 \lambda} \| u - g \|_2$. The resolvent operators $(I + \sigma \partial F^*)^{-1}$ and $(I + \tau \partial G)^{-1}$ take the following form: we have
$$p = (I+\sigma \partial F^{*})^{-1} (\tilde{p}) \quad \equiv \quad  p_{i,j} = \frac{\tilde{p}_{i,j}}{\max (1,|\tilde{p}_{i,j}|)}$$
and
\begin{equation}\label{eq:resolventforg}
u = (I + \tau \partial G)^{-1}(\widetilde{u}) \quad \equiv \quad u_{i,j} = \frac{\widetilde{u}_{i,j} + \frac\tau\lambda g_{i,j}}{1 + \frac\tau\lambda}
\end{equation}
(note that $\lambda$ in \cite{ChambollePock} corresponds to $\frac{1}{\lambda}$ in our paper).

Similarly, the Huber-ROF model can be easily presented in the framework of the Chambolle--Pock algorithm. Recall that it is the following minimisation problem
\begin{equation}\label{eq:huberrof}
\min_{u} \int_\Omega |\nabla u|_\alpha \, \dd x + \frac{1}{2\lambda} \int_\Omega |u-f|^2 \,\dd x\colon,
\end{equation}
where $| \cdot |_\alpha$ denotes the Huber function given by \eqref{huber_func}. This corresponds to the following choices: 
the functional $F\colon Y \rightarrow \mathbb{R}$ is given by $F(p) = \| | p |_\alpha \|_1$; and the functional $G\colon X \rightarrow \mathbb{R}$ is again defined as~$G(u) = \frac{1}{2\lambda} \| u - g \|_2$. Thus, the proximal operator $(I + \tau \partial G)^{-1}$ is again defined by equation \eqref{eq:resolventforg}, and the proximal operator $(I+\sigma \partial F^{*})^{-1}$ is given by
$$p = (I+\sigma \partial F^{*})^{-1} (\tilde{p}) \quad \equiv \quad  p_{i,j} = \frac{\frac{\tilde{p}_{i,j}}{1 + \sigma \alpha}}{\max (1,|\frac{\tilde{p}_{i,j}}{1+\sigma\alpha}|)}.$$
\end{example}


If the functional $G$ is uniformly convex, the Chambolle--Pock algorithm may be modified to obtain an improved convergence rate. 
This is done by adjusting in each iteration the step variables $\tau$, $\sigma$ and $\theta$ which were constant in the original formulation. We note that by uniform convexity of $G$, there exists $\gamma > 0$ 
such that
\begin{equation}
G(x') \geq G(x) + \langle p, x'-x \rangle + \frac{\gamma}{2} \| x' - x \|^2 
\end{equation}
for all $p \in \partial G(x)$ and $x' \in X$.

\begin{algorithm}
\caption{The accelerated Chambolle--Pock algorithm}
\begin{algorithmic}[1]
\STATE Initialisation: Fix $\tau_0, \sigma_0 > 0$ with and initial data $(x^0,y^0) \in X \times Y$. Set $\bar{x}^0 = x^0$.

\STATE Iterations: For $n \geq 0$, update the variables $x^n, y^n, \bar{x}^n, \theta_n,$ $\tau_n, \sigma_n$ as follows:
\begin{equation}
y^{n+1} = (I + \sigma_n \partial F^*)^{-1} (y^n + \sigma_n K \bar{x}^n);
\end{equation}
\begin{equation}
x^{n+1} = (I + \tau_n \partial G)^{-1} (x^n - \tau_n K^* y^{n+1});
\end{equation}
\begin{equation}
\theta_n = 1/\sqrt{1 + 2\gamma \tau_n};
\end{equation}
\begin{equation}
\tau_{n+1} = \theta_n \tau_n;
\end{equation}
\begin{equation}
\sigma_{n+1} = \sigma_n / \theta_n;
\end{equation}
\begin{equation}
\bar{x}^{n+1} = x^{n+1} + \theta_n(x^{n+1} - x^n).
\end{equation}
\end{algorithmic}
\end{algorithm}

Then, for $L = \| K \|$, if $\tau_0 \sigma_0 L^2 \leq 1$, the accelerated Chambolle--Pock algorithm converges to a solution of the saddle-point problem \eqref{eq:chambollepockprimaldual} with a convergence rate of $O(1/N^2)$. Similar result holds as well if $F^*$ is uniformly convex. Note that only the step variables are modified with respect to the standard Chambolle--Pock algorithm, thus the formulas for the proximal operators (e.g. as in Example \ref{example:main}) remain the same. We will use for implementation primarily the accelerated version of the Chambolle--Pock algorithm.

\begin{figure*}[h]
    \centering
    \includegraphics[width=0.9\textwidth]{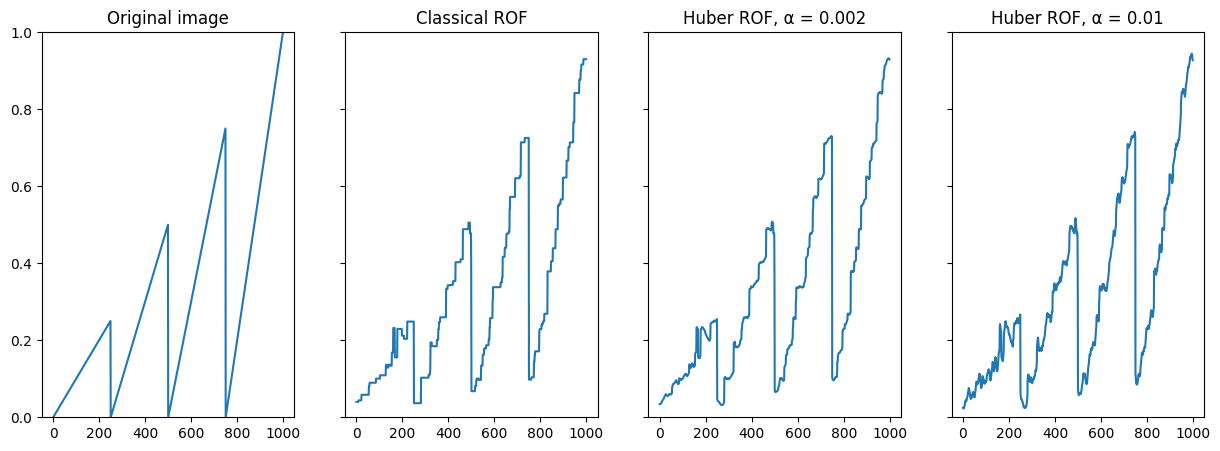}
    \includegraphics[width=0.9\textwidth]{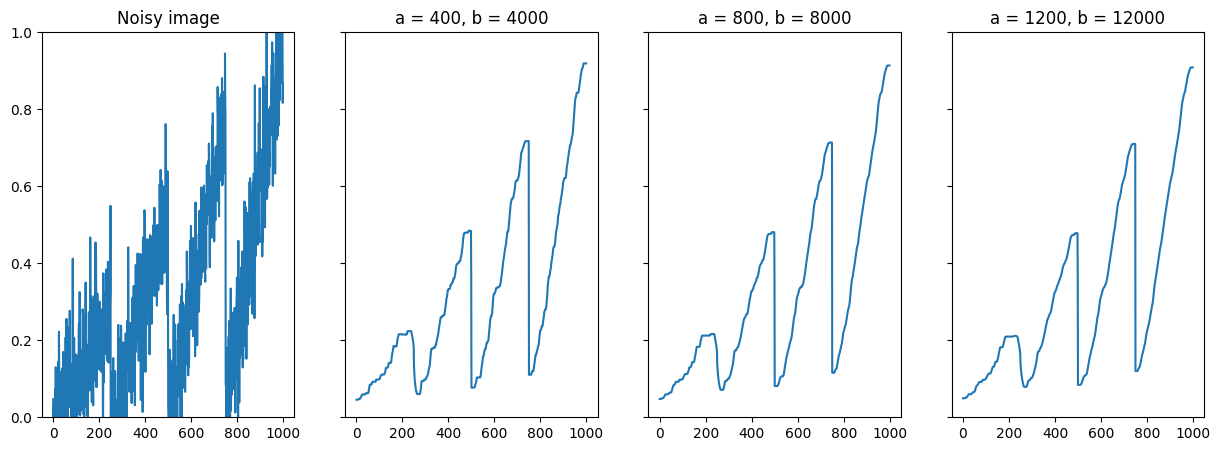}
    \caption{Reconstruction of a one-dimensional synthetic image with classical ROF, Huber-ROF and adaptive double-phase ROF models}
    \label{fig:sawpicture}
\end{figure*}

\begin{figure*}[h]
    \centering
    \includegraphics[width=0.9\textwidth]{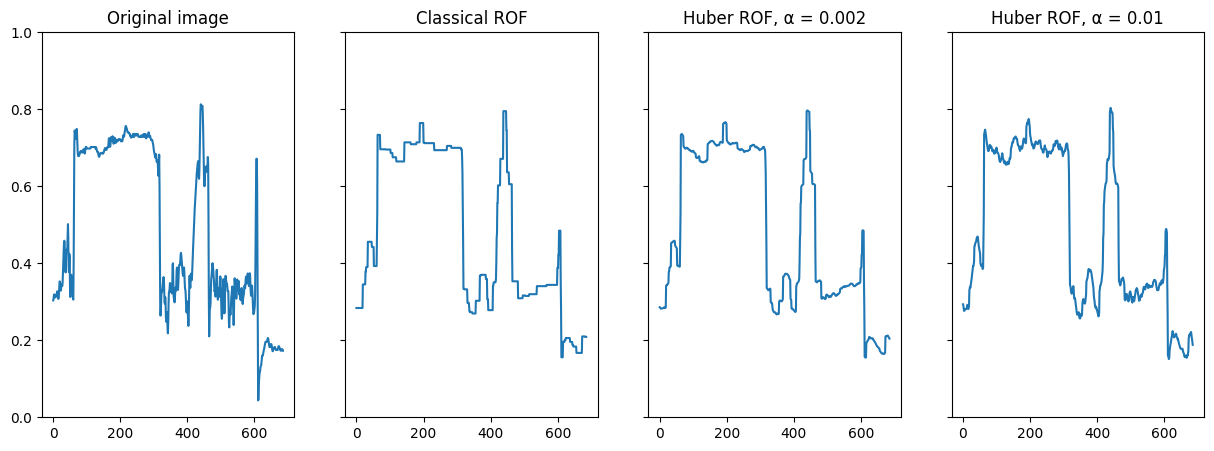}
    \includegraphics[width=0.9\textwidth]{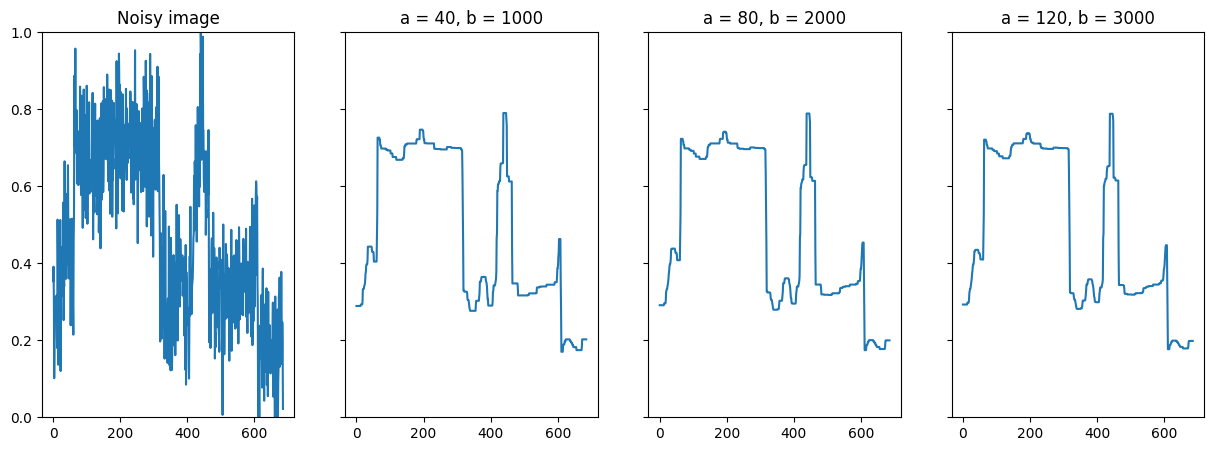}
    \caption{Reconstruction of a one-dimensional natural image with classical ROF, Huber-ROF and adaptive double-phase ROF models}
    \label{fig:cut1picture}
\end{figure*}

\section{Implementation of the double-phase model}

Here we implement the adaptive double-phase ROF model using the Chambolle--Pock algorithm described in Section \ref{sec:chambollepock}. We have a setup of the type presented in Example \ref{example:main}: the minimization problems are defined on a regular Cartesian grid of size $M \times N$, i.e.,
\begin{equation}
\big\{ (i,j)\colon \,\,\, 1 \leq i \leq M, \,\,\, 1 \leq j \leq N \big\}.
\end{equation}
(In the course of the paper we also consider one-dimensional Cartesian grids, but for this purpose they may be understood as two-dimensional grids of size $M \times 1$.) For simplicity, the spacing of the grid $h$ is taken to equal one. Then, the corresponding (finite dimensional) Banach spaces are $X = \mathbb{R}^{MN}$ and $Y = X \times X$ with the standard scalar product. The linear operator $K\colon X \rightarrow Y$ is the discrete gradient $\nabla$ defined by \eqref{eq:definitionofgradient}. The functionals $F\colon Y \rightarrow \mathbb{R}$ and $G\colon X \rightarrow \mathbb{R}$ are defined in the following way. The regularization term $F$ is given by the generalized total variation, i.e., for the dual variable $p \in Y$ we set
\begin{equation}
F(p) = \sum_{i,j} \varphi((i,j),|p_{i,j}|)
\end{equation}
for several specific choices of $\varphi$ given below, and the fidelity term $G$ is given by the formula
\begin{equation}
G(u) = \frac{1}{2\lambda} \sum_{i,j} |u_{i,j} - g_{i,j}|^2
\end{equation}
for the image $u \in X$. The choices of $\varphi$ discussed in the paper are the following: for the classical ROF model \eqref{eq:rof} it is
\begin{equation}
\varphi((i,j),|p_{i,j}|) = |p_{i,j}|
\end{equation}
and for the double-phase ROF model \eqref{eq:dprof} it is
\begin{equation}\label{eq:doublephasevarphi}
\varphi((i,j),|p_{i,j}|) = |p_{i,j}| + w_{i,j} |p_{i,j}|^2.
\end{equation}
Furthermore, we discuss a comparison with the Huber-ROF model \eqref{eq:huberrof} which is also designed to reduce staircasing, for which
\begin{equation}
\varphi((i,j),|p_{i,j}|) = |p_{i,j}|_\alpha
\end{equation}
for some $\alpha > 0$. To solve the minimization problems \eqref{eq:rof}, \eqref{eq:dprof} and \eqref{eq:huberrof}, we apply the accelerated Chambolle--Pock algorithm for the above functionals. We set the initial parameters to equal $\tau_0 = \sigma_0 = 1/4$. To apply the Chambolle--Pock algorithm, as described in Section \ref{sec:chambollepock}, in the subsequent iterations we need to perform the following computations:
\begin{equation}
p^{n+1} = (I + \sigma \partial F^*)^{-1}(p^n + \sigma K \overline{u}^n);
\end{equation}
\begin{equation}
u^{n+1} = (I + \tau \partial G)^{-1}(u^n - \tau K^* p^{n+1});
\end{equation}
and the update for $\overline{u}^{n+1}$ which is a linear transformation. Here, $K^*$ denotes the dual operator to $K$ (in our case, the minus divergence). In order to apply the algorithm, we need to identify the resolvent operators $(I + \sigma \partial F^*)^{-1}$ and $(I + \tau \partial G)^{-1}$. For the second one, one easily sees that as in Example \ref{example:main}, it holds that
\begin{equation}
u = (I + \tau \partial G)^{-1}(\widetilde{u}) \quad \Leftrightarrow \quad u_{i,j} = \frac{\widetilde{u}_{i,j} + \frac\tau\lambda g_{i,j}}{1 + \frac\tau\lambda}
\end{equation}
Below, we prove a formula for $(I + \sigma \partial F^*)^{-1}$ in the case of the double-phase ROF model.

\begin{proposition}
Let $F$ be defined by the formula
\begin{equation}
F(p) = \sum_{i,j} (|p_{i,j}| + w_{i,j} |p_{i,j}|^2).
\end{equation}
It holds that
\begin{align}
&p=(I+\sigma \partial F^{*})^{-1}(\tilde{p}) \Longleftrightarrow \\
& p_{i,j} = \begin{cases}
\frac{\tilde{p}_{i,j}}{\max (1,|\tilde{p}_{i,j}|)} &  \text{if } w_{i,j} = 0; \\
\displaystyle  \min \left(1, \frac{w_{i,j}|\tilde{p}_{i,j}|+\sigma}{w_{i,j} |\tilde{p}_{i,j}| + \sigma |\tilde{p}_{i,j}|}\right) \tilde{p}_{i,j} & \text{if } w_{i,j} > 0.
\end{cases}
\end{align}
\end{proposition}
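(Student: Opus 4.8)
First I would exploit that $F$ is \emph{separable} across pixels: writing $F(p) = \sum_{i,j} F_{i,j}(p_{i,j})$ with $F_{i,j}(q) = |q| + w_{i,j}|q|^2$ for $q\in\mathbb{R}^2$, the conjugate splits as $F^*(\xi) = \sum_{i,j} F_{i,j}^*(\xi_{i,j})$, so the resolvent $(I+\sigma\partial F^*)^{-1}$ acts componentwise, and it suffices to solve, for each fixed $(i,j)$, the inclusion $\tilde p_{i,j}\in p_{i,j} + \sigma\,\partial F_{i,j}^*(p_{i,j})$ in $\mathbb{R}^2$; since $F^*_{i,j}$ is proper, convex and lower semicontinuous, $I+\sigma\partial F^*_{i,j}$ is maximal monotone and this inclusion has a unique solution $p_{i,j}$.

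Next I would compute $F_{i,j}^*$. Since $F_{i,j}(q)=\psi(|q|)$ with $\psi(s)=s+w_{i,j}s^2$ convex and nondecreasing on $[0,\infty)$, the defining supremum reduces to the half-line, $F_{i,j}^*(\xi)=\sup_{s\ge 0}\big((|\xi|-1)s-w_{i,j}s^2\big)$. For $w_{i,j}=0$ this is the indicator $\iota_{\bar B_1}$ of the closed Euclidean unit ball, and the resolvent is the Euclidean projection onto $\bar B_1$, which is the first branch $p_{i,j}=\tilde p_{i,j}/\max(1,|\tilde p_{i,j}|)$. For $w_{i,j}>0$, completing the square gives an explicit smooth radial convex function that vanishes on $\bar B_1$ and grows quadratically outside it; the crucial structural fact is that $\partial F_{i,j}^*(q)=\{0\}$ when $|q|<1$ and is a single nonnegative multiple of $q/|q|$ when $|q|>1$. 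In particular $p_{i,j}$ and $\tilde p_{i,j}$ must be positively parallel, so the $\mathbb{R}^2$-inclusion collapses to a scalar relation between the magnitudes $\rho=|p_{i,j}|$ and $r=|\tilde p_{i,j}|$. Solving the branch $|p_{i,j}|\le 1$ (where $p_{i,j}=\tilde p_{i,j}$, which forces $r\le 1$) and the branch $|p_{i,j}|>1$ (where the relation between $\rho$ and $r$ is affine) separately, and then reassembling them, yields exactly $p_{i,j}=\min\big(1,\tfrac{w_{i,j}r+\sigma}{w_{i,j}r+\sigma r}\big)\tilde p_{i,j}$. Equivalently, one may invoke Moreau's identity $(I+\sigma\partial F^*)^{-1}(\tilde p)=\tilde p-\sigma\,(I+\sigma^{-1}\partial F)^{-1}(\sigma^{-1}\tilde p)$ and compute instead the (separable, radial) proximal map of $F$ itself, a one-dimensional minimization solved by an explicit shrinkage; the two computations agree.

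The one genuinely delicate point --- the main obstacle --- is the non-smooth threshold: the kink of $F_{i,j}$ at $q=0$, equivalently that of $F_{i,j}^*$ on the sphere $|\xi|=1$, where the subdifferential is set-valued. This is precisely what the $\min(1,\cdot)$ and the case split on $w_{i,j}$ are there to encode, and one must check that the two branches glue consistently at the threshold (both give $p_{i,j}=\tilde p_{i,j}$ when $|\tilde p_{i,j}|=1$) and, to be safe, verify directly that the claimed map satisfies the defining inclusion $\tilde p_{i,j}\in p_{i,j}+\sigma\,\partial F^*_{i,j}(p_{i,j})$ for every $\tilde p_{i,j}$. Everything else is routine single-variable calculus.
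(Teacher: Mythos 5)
Your proposal follows essentially the same route as the paper: compute $F^*$ pixelwise, split on $w_{i,j}=0$ versus $w_{i,j}>0$, reduce to a radial scalar equation via positive parallelism of $p_{i,j}$ and $\tilde p_{i,j}$, and glue the branches at $|\tilde p_{i,j}|=1$, exactly as in the paper's argument (your Moreau-identity alternative is a nice cross-check the paper does not use). One small correction to your narrative: for $w_{i,j}>0$ the conjugate $\xi\mapsto \max\{0,|\xi|-1\}^2/(2w_{i,j})$ is actually $C^1$ across the sphere $|\xi|=1$ (the dual counterpart of the kink of $F_{i,j}$ at $0$ is the flat region $\{F_{i,j}^*=0\}$, not a kink of $F_{i,j}^*$), so the gluing there is even less delicate than you suggest and your argument goes through.
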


\begin{proof}
Let us first observe that the conjugate functional $F^*: Y \rightarrow \mathbb{R}$ is given by
$$F^{*}(p) = \sum_{i,j} \varphi^{*}((i,j),|p_{i,j}|),$$ 
where $\varphi^*((i,j),|p_{i,j}|)$ is defined pointwise as
$$
\varphi^*((i,j),|p_{i,j}|) =
\begin{cases}
\infty \chi_{(1,\infty)}(|p_{i,j}|) & \text{if } w_{i,j} = 0; \\
\displaystyle  \frac{\max\{0,|p_{i,j}| - 1\}^{2}}{2 w_{i,j}} & \text{if } w_{i,j} > 0.
\end{cases}
$$
Thus, the functional $F^{*}$ takes the following form
\begin{align}
F^{*}(p) = \sum_{i,j} \infty &\chi_{\{w=0\}}((i,j))\, \chi_{(1,\infty)}(|p_{i,j}|) \\ &+ \chi_{\{w>0\}}((i,j))\, \frac{\max\{0,|p_{i,j}| - 1\}^{2}}{2 w_{i,j}}.
\end{align}
In the case $w_{i,j} = 0$, $\varphi^*((i,j),|p_{i,j}|)$ is the indicator function of the unit ball, and the resolvent operator is given by pointwise projections onto $L^2$ unit balls, that is 
$$p=(I+\sigma \partial F^{*})^{-1} (\tilde{p}) \Longleftrightarrow  p_{i,j} = 
\frac{\tilde{p}_{i,j}}{\max (1,|\tilde{p}_{i,j}|)}.$$
For $w_{i,j} > 0$, the function $\varphi^{*}((i,j),|p_{i,j}|)$ is differentiable in the second coordinate with the gradient given by 
$$ \nabla_p \varphi^*((i,j),|p_{i,j}|) =
\begin{cases}
0 & \text{if } |p_{i,j}| \le 1; \\
\displaystyle  \frac{(|p_{i,j}|-1)}{w_{i,j} |p_{i,j}|} p_{i,j} & \text{if } |p_{i,j}| > 1.
\end{cases}
$$
To compute the resolvent $p=(I+\sigma \partial F^{*})^{-1}(\tilde{p})$, we now solve
$$\tilde{p}_{i,j} = p_{i,j} + \sigma \nabla_{p} \varphi^*(x,p_{i,j}).$$

If $|p_{i,j}|\le 1$, then $\nabla_{p} \varphi(x,p_{i,j})=0$ and so $$ p=(I+\sigma \partial F^{*})^{-1}(\tilde{p}) \Longleftrightarrow p_{i,j} = \tilde{p}_{i,j}.$$
For $|p_{i,j}|>1$, using the expression for the gradient we get
\begin{equation}
p=(I+\sigma \partial F^{*})^{-1}(\tilde{p}) \Longleftrightarrow \tilde{p}_{i,j} = \left(1+\sigma \frac{(|p_{i,j}|-1)}{w_{i,j} |p_{i,j}|}  \right) p_{i,j}
\end{equation}
or equivalently
\begin{equation}
p_{i,j} = \lambda_{i,j} \tilde{p}_{i,j} \, \text{ with } \, \lambda_{i,j} = \left( \frac{w_{i,j} |\tilde{p}_{i,j} | + \sigma}{w_{i,j} |\tilde{p}_{i,j}| + \sigma |\tilde{p}_{i,j}|}\right).
\end{equation}

Thus, the final expression for the resolvent is
\begin{equation}
p_{i,j} = \begin{cases}
\frac{\tilde{p}_{i,j}}{\max (1,|\tilde{p}_{i,j}|)} &  \text{if } w_{i,j} = 0; \\
\displaystyle  \min \left(1, \frac{w_{i,j}|\tilde{p}_{i,j}|+\sigma}{w_{i,j} |\tilde{p}_{i,j}|+\sigma |\tilde{p}_{i,j}|}\right) \tilde{p}_{i,j} & \text{if } w_{i,j} > 0,
\end{cases}
\end{equation}
which concludes the proof.
\end{proof}

\section{Qualitative experiments}
\label{sec:qual}

We first tested how the adaptive double-phase ROF model performs in a model one-dimensional case. Our goal is to reduce staircasing with respect to the classical ROF model, while remaining reasonably accurate to the original picture. 

In Figures \ref{fig:sawpicture} and \ref{fig:cut1picture} we present visually the effect of the adaptive double-phase ROF model on two test images, for several choices of parameters for the weight and in comparison with the results of the classical ROF model and the Huber-ROF model for two choices of parameters. The parameters {$\lambda_0$ and~$\lambda$ are} set to $0.24$ and the 
{variance} of the Gaussian noise is set to $0.01$. The original image is presented on the left-hand side of the first row of the respective Figure; the function in Figure \ref{fig:sawpicture} 'saw' is a synthetic datum with several jumps of different sizes and a linear interpolation between them, while the function in Figure \ref{fig:cut1picture} 'cut1' represents a one-dimensional slice (a single row) of the natural image 'schnitzel' (see below). We have used a rescaled version of 'cut1' in Figure \ref{fig:cut1picture}. Subsequent pictures in the first row show: the noisy image after adding a Gaussian noise with 
{variance} $0.01$; the result of the reconstruction via the classical ROF model; and the result of the reconstruction via the adaptive double-phase ROF model. In the second row, we present (from left to right) the results of the reconstruction via the adaptive double-phase ROF model for two different choices of the weight and the result of the reconstruction via Huber-ROF model for two different choices of the parameter~$\alpha$.

We observe the following effects. In both Figure \ref{fig:sawpicture} and Figure \ref{fig:cut1picture}, we observe that the staircasing which is clearly visible in the reconstruction using the classical ROF model is reduced, especially for higher values of the parameters.
The Huber-ROF model is also capable of reducing staircaising somewhat, especially for higher values of $\alpha$. However it also retains spurious wavy structures, which are not present in the adaptive double-phase ROF model. 

\begin{figure*}[h]
    \centering
    \includegraphics[width=0.95\linewidth]{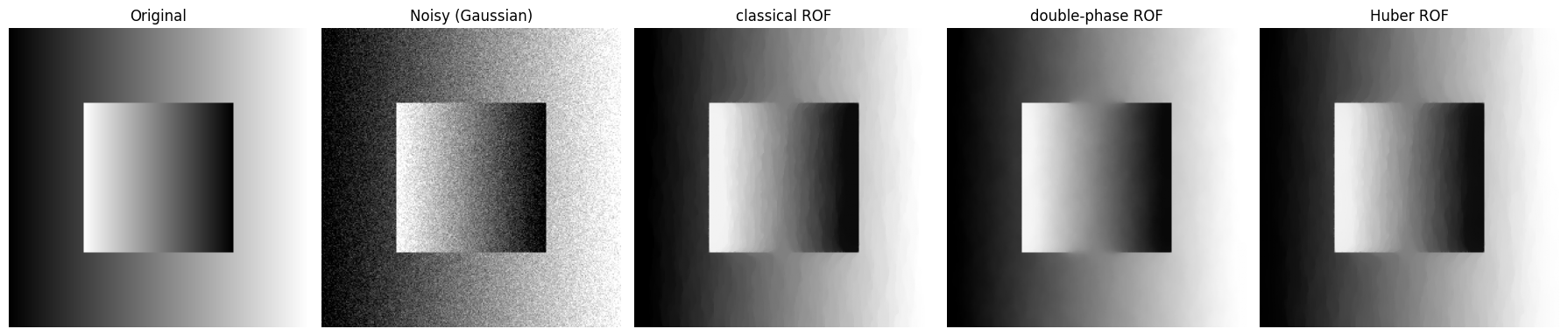}
    \caption{Original 'double gradient' synthetic image; image with added Gaussian noise ({$\sigma^2=0.01$}); denoising results corresponding to the maximum SSIM values, respectively: classical ROF ($\lambda = 0.24$, ${\rm SSIM} = 0.926$, ${\rm PSNR} = 32.81$), double-phase ROF ($\lambda = 0.12$, ${\rm SSIM} = 0.962$, ${\rm PSNR} = 34.38$), and Huber-ROF with $\alpha = 0.01$ ($\lambda = 0.26$, ${\rm SSIM} = 0.963$, ${\rm PSNR} = 34.02$). }
    \label{fig:denoiseddg2_256}
\end{figure*}

\begin{figure*}[h]
    \centering
    \includegraphics[width=0.95\linewidth]{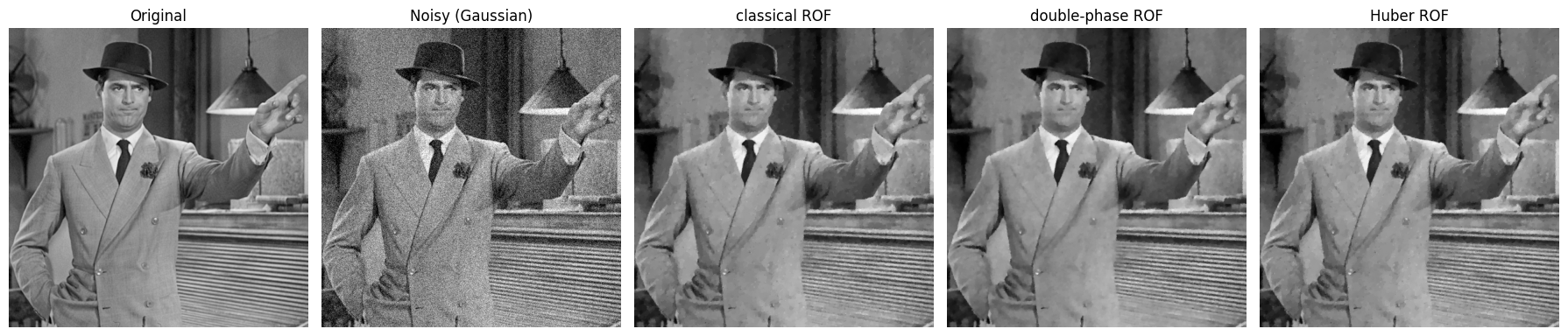}
    \caption{Original 'carygrant' image; image with added gaussian noise ({$\sigma^2=0.01$}); denoising results corresponding to the maximum SSIM values, respectively: classical ROF ($\lambda = 0.06$), double-phase ROF ($\lambda = 0.08$), and Huber ROF ($\alpha=0.01, \lambda = 0.08$). }
    \label{fig:denoisedCG1}

    \centering
    \includegraphics[width=0.95\linewidth]{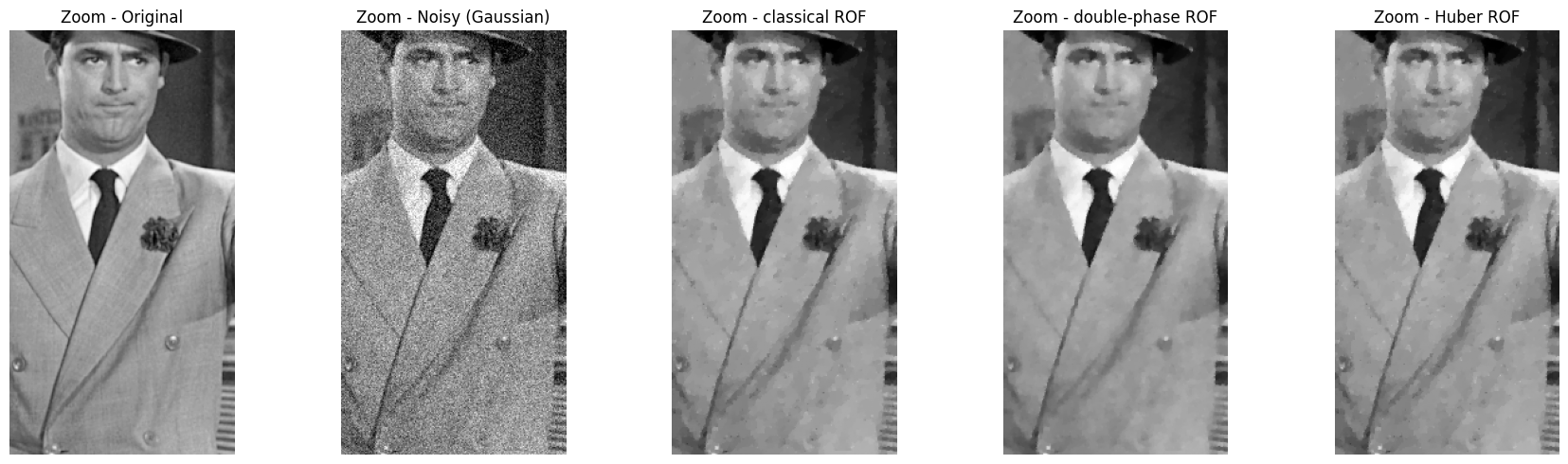}
    \caption{Magnified part of Figure \ref{fig:denoisedCG1} 'carygrant': face and suit.}
    \label{fig:denoisedCGZOOM1a}

    \centering
    \includegraphics[width=0.95\linewidth]{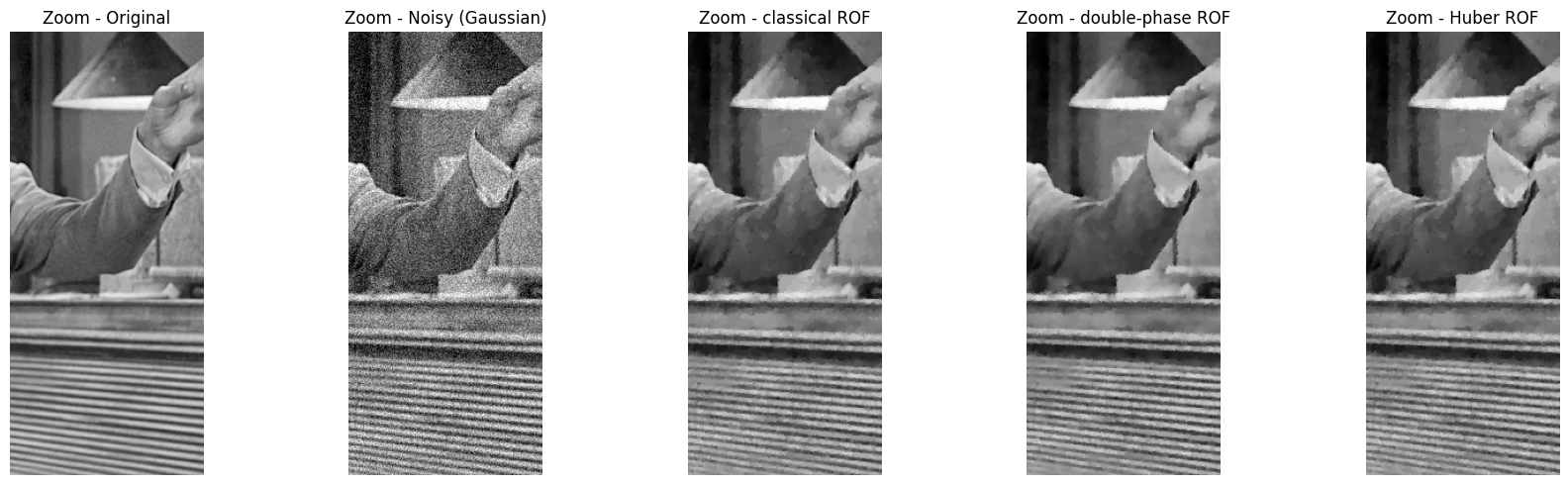}
    \caption{Magnified part of Figure \ref{fig:denoisedCG1} 'carygrant': arm and lamp.}
    \label{fig:denoisedCGZOOM1b}
\end{figure*}

\begin{figure*}[h]
  \centering
  \subfloat[Mollified gradient or $u_{\rm ROF}$]{
  \includegraphics[width=0.32\textwidth]{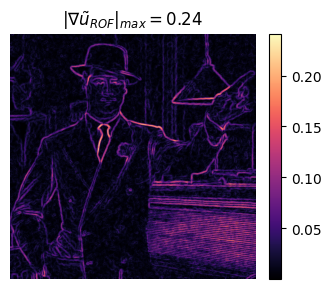}}
  \subfloat[Dependence on $|\nabla \tilde{u}_{\rm ROF}|$.]{\includegraphics[width=0.32\textwidth]{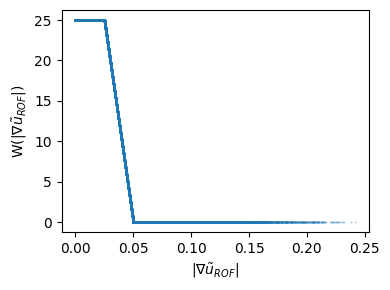}}
  \subfloat[The weight.]{\includegraphics[width=0.32\textwidth]{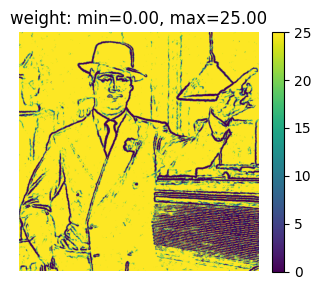}}
  \caption{Construction of the weight from mollified gradient of $u_{\rm ROF}$ with $\lambda=0.06$ for the 'carygrant' image.}
  \label{fig:weightCG1}
\end{figure*}

\begin{figure*}[!h]
    \centering
    \includegraphics[width=0.99\linewidth]{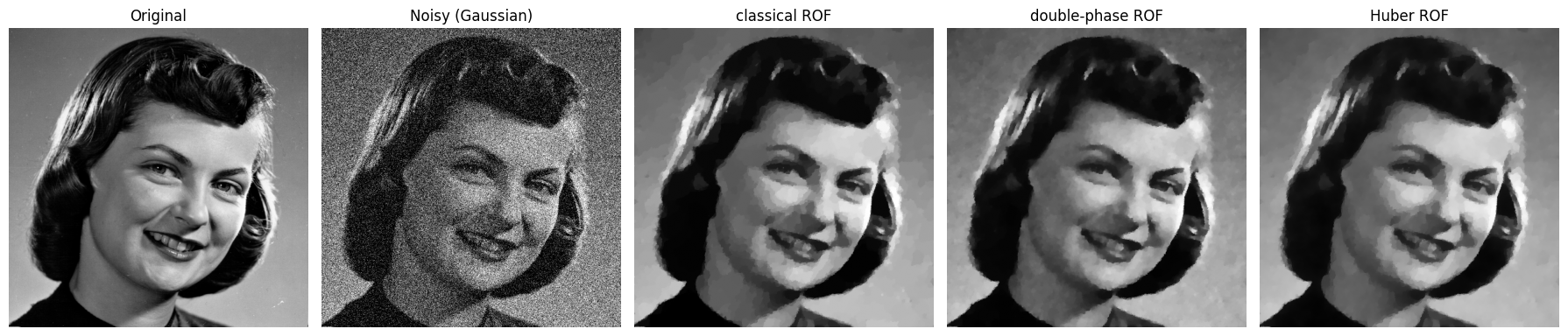}
    \caption{Original 'girlface' image; image with added gaussian noise ({$\sigma^2=0.04$}); denoising results corresponding to the maximum SSIM values, respectively: classical ROF ($\lambda = 0.22$), double-phase ROF ($\lambda = 0.14$), and Huber ROF ($\alpha=0.01$, $\lambda = 0.24$).}
    \label{fig:denoisedGF2}

    \centering
    \includegraphics[width=0.99\linewidth]{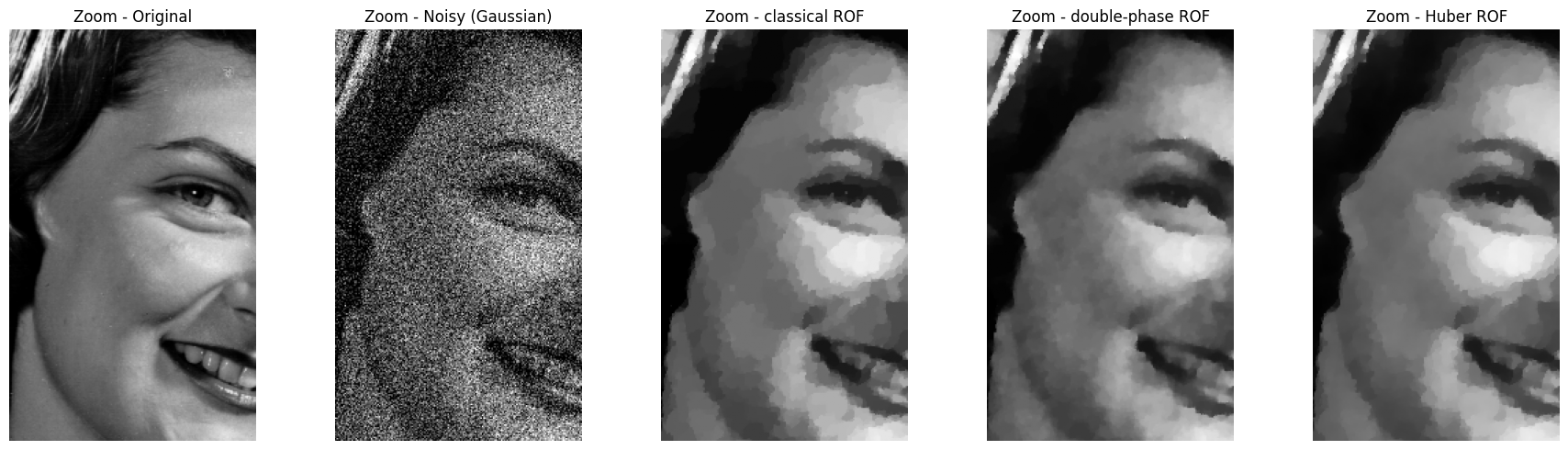}
    \caption{Magnified part of Figure \ref{fig:denoisedGF2} 'girlface': eye and cheek.}
    \label{fig:denoisedGFZOOM2}
\end{figure*}

\begin{figure*}[!h]
    \centering
    \includegraphics[width=0.99\linewidth]{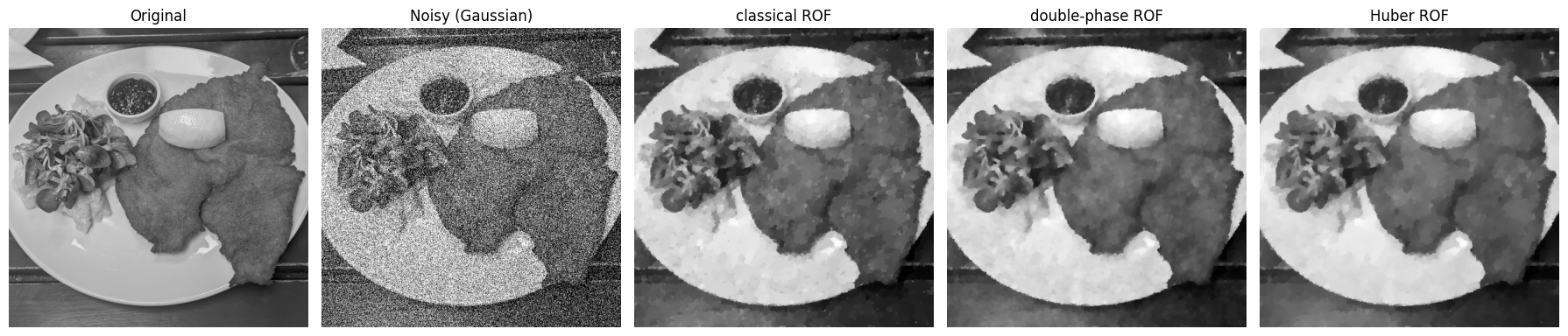}
    \caption{Original 'schnitzel' image; image with added Gaussian noise ({$\sigma^2=0.07$}); denoising results corresponding to the maximum SSIM values, respectively: classical ROF ($\lambda = 0.22$, ${\rm SSIM} = 0.555$, ${\rm PSNR} = 21.58$), double-phase ROF ($\lambda = 0.16$, ${\rm SSIM} = 0.576$, ${\rm PSNR} = 22.15$), and Huber ROF with $\alpha = 0.01$ ($\lambda = 0.26$, ${\rm SSIM} = 0.566$, ${\rm PSNR} = 22.02$).}
    \label{fig:denoisedschn4_512}

    \centering
    \includegraphics[width=0.99\linewidth]{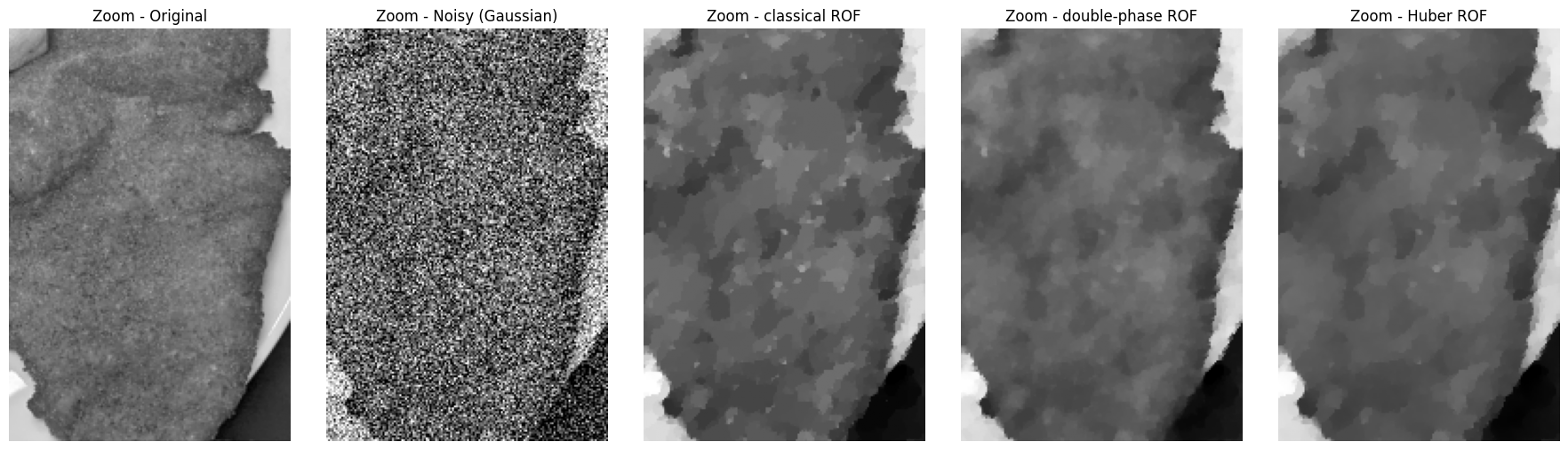}
    \caption{Magnified part of Figure \ref{fig:denoisedschn4_512} 'schnitzel': breadcrumbs texture.}
    \label{fig:schnitzelzoom}
\end{figure*}

We next present the results of several numerical experiments concerning the performance of the adaptive double-phase ROF model for two-dimensional images. As in the one-dimensional case, our goal is to reduce the staircasing effect of the classical ROF model, and we use a similar setup. To quantify the accuracy of the reconstruction, we 
used the structural similarity index measure (SSIM) \cite{WBSS2004} and the peak signal-to-noise ratio (PSNR), see e.g. \cite{Salomon2007}. However, in this 
{section} we focus on presenting the visual results, showing the graphs for the respective measures only on a few occasions. For a detailed presentation of numerical experiments, including the calculations of these metrics for various choices of parameters, we refer to the available code and summary files \cite{github}.  In the presentation below, in each Figure with images, from left to right we show: the original image; the image after adding a Gaussian noise of prescribed variance; the result of the reconstruction using the classical ROF model; the adaptive double-phase ROF model; and the Huber-ROF model with $\alpha = 0.01$.

Let us first present visually the result for a synthetic image (in Figure \ref{fig:denoiseddg2_256}) for the noise level {$\sigma^2 = 0.01$}; note that all our images are normalized so that they take values in the interval $[0,1]$. We observe that, in comparison with the classical ROF model and the Huber-ROF model, when using the adaptive double-phase ROF model there is much less staircasing in the inside square; additionally, the gradient in the outer square is more smoothed out. We also observe that the edges of the inner square are well-preserved, with the exception of the area in the middle when they are small (this effect appears also in the Huber-ROF model and, to a lesser degree, in the classical ROF model).

Then, we present the results for several classical test images. First, in Figure \ref{fig:denoisedCG1}, we consider the 'carygrant' image for the noise level {$\sigma^2 = 0.01$}. We use parameters $a = 50$ and $b = 1000$ for the weight. We again observe that the main features of the original image are well-preserved and that there is much less staircasing, especially on the cheeks, the suit and the hand; enlarged parts of these pictures for comparison are shown in Figures \ref{fig:denoisedCGZOOM1a}-\ref{fig:denoisedCGZOOM1b}. Furthermore, the texture visible at the bottom of Figure \ref{fig:denoisedCGZOOM1b} is preserved better. The construction of the weight is presented in Figure \ref{fig:weightCG1}: we first mollify the gradient of the solution to the ROF problem (on the left-hand side), we apply to it the weight visible in the middle picture, and the resulting weight is presented on the right-hand side.

In Figure \ref{fig:denoisedGF2}, we present the same set of images for the 'girlface' test image for a higher noise level {$\sigma^2 = 0.04$}. We use parameters $a = 60$ and $b = 1200$ for the weight. The quality of the reconstructed image is naturally slightly worse than in the previous example (for each model), but the image is reasonably well-preserved and again we observe less staircasing with respect to the classical ROF model and the Huber-ROF model, in particular in the area around the eye and on the cheek; enlarged parts of these pictures are presented in Figure \ref{fig:denoisedGFZOOM2}. Moreover, the changes in intensity of the picture are more gradual with respect to the other two models. 

In Figure \ref{fig:denoisedschn4_512}, we present the same set of images for a test image of a schnitzel, again for a higher noise level {$\sigma^2 = 0.07$}. We use parameters $a = 60$ and $b = 1200$ for the weight. This time the reconstructed image is visibly blurred (for each model), but the overall contours are well-preserved and again we observe less staircasing with respect to the classical ROF model and the Huber-ROF model, in particular on the schnitzel itself; magnified parts of these pictures are presented in Figure~\ref{fig:schnitzelzoom}. We believe the double-phase ROF model was capable of reproducing the {crispy} texture of the schnitzel more naturally than the other models. 

\section{{Comparison of optimised models}\label{sec:comparison}}

{In this section, we report on a quantitative comparison of the performance of dpROF against several established denoising models. With scientific applications in mind, we leave out neural-network based methods, and only compare against similarly interpretable models. We carried out a performance comparison experiment using the Berkeley Segmentation Data Set (BSDS500). For noise variance values 
\[\sigma^2 \in \{0.001, 0.002, 0.004, 0.007, 0.01, 0.02, 0.04, 0.07\}\] 
we found optimal values of the parameters in the competing models on the training set. 
(see the 'comparison' supplement in \cite{github} for details). We chose the optimal values based on the average distance of denoised images from original images in the SSIM metric. 

Based on optimization experiments, we fix a variant of our model with prescribed values of parameters (given in Table \ref{tab:dprof_par}) depending on the noise variance, which can be approximately determined for a given image using scikit-image function {\tt estimate\_sigma}.  

\begin{table}[ht] 
    \centering
    \renewcommand{\arraystretch}{1.5}
    \begin{tabular}{|c|c|c|c|c|}
        \hline
        Noise level & $\lambda_{\rm ROF}$ & $\lambda$ & $a$ & $b/a$\\
        \hline
        Small ($\sigma^2 \leq 0.005$) & \multirow{2}{*}{$0.7 \cdot \sigma$} & \multirow{3}{*}{$0.5 \cdot \sigma$} & \multirow{3}{*}{60} & 100 \\
        Medium ($0.005 < \sigma^2 \leq 0.015$) & & & & 50 \\
        Large ($\sigma^2 > 0.015$) &  $0.8 \cdot \sigma$ & & & 25 \\
        \hline
        \multicolumn{4}{c}{\vspace{0.5em}} \\ 
    \end{tabular}
    \caption{Choice of parameters in the adaptive double-phase ROF model}
    \label{tab:dprof_par}
\end{table}

In addition to ROF and Huber-ROF, we compare dpROF to more recent, established models: Chen--Levine--Rao (CLR) variable exponent model \cite{CLR}, Total Generalized Variation (TGV) \cite{BKP2010, KBS2011} and non-local means (NLM) \cite{BCM2011}. CLR is an adaptive variant of ROF, where $|\nabla u|$ replaced by a variable exponent integrand in areas where sharp contours are not expected. To be precise, it amounts to the following minimization problem: 
\begin{equation}\label{clr}  \min_{u}\  \int \varphi(x,\nabla u)\, \dd x + \frac{1}{2\lambda}\int|u-g|^2 \,\dd x,   
\end{equation} 
where 
\begin{equation}  \label{clrphi} 
\varphi(x,\xi) = 
\begin{cases} 
\frac{1}{p(x)} |\xi|^{p(x)} & \text{if } |\xi| \leq \beta, \\
|\xi| - \frac{\beta p(x) - \beta^{p(x)}}{p(x)} & \text{if } |\xi| > \beta,
\end{cases}
\end{equation}
and the exponent $p(x)$ is given by 
\[ p(x) = 1 + (1 + k | \nabla G_\sigma * g(x)|^2)^{-1}. \]
In the above formulae,  $G_\sigma(x) = \sigma^{-1} \exp(- |x|^2/2 \sigma^2)$ is the Gaussian kernel and $\lambda >0$, $\beta >0$, $k >0$, $\sigma >0$ are parameters. An important qualitative difference from our model is that in CLR, $\varphi(x, |\xi|$) is always linear in $|\xi|$ for large values of $|\xi|$. It is also difficult to solve efficiently by the Chambolle--Pock scheme, as the resolvent is not given by a simple explicit formula. Following the authors \cite{CLR}, we implemented a numerical scheme based on an explicit discretization of the gradient flow of the minimized quantity. 

Total Generalized Variation (TGV) is a second-order, non-local variational model given by 
\begin{equation}\label{tgv}  \min_{u}\  TGV_\alpha(u) + \frac{1}{2}\int|u-g|^2 \,\dd x,   
\end{equation} 
\begin{equation} \label{tgv2}
TGV_\alpha(u) = \min_{w} \alpha_1 \int |\nabla u - w|\, \dd x + \frac{\alpha_0}{2} \int |\nabla w + \nabla w^T| \, \dd x,  
\end{equation} 
where $\alpha = (\alpha_1, \alpha_0)$ is a pair of positive constants. Owing to infimal convolution structure of TGV, it is amenable to the robust primal-dual Chambolle--Pock algorithm. We use here an implementation from \cite{recon}.   

NLM is a class of non-variational, non-local denoising models. Here we consider a variant which produces a denoised image by the explicit formula 
\[ u(x) = \frac{1}{Z(x)} \int_{S_a(x)}\!\!\! \exp(-(d(x,y)^2 - 2 \sigma^2)_+/h^2 \sigma^2) g(y)\, \dd y,  \] 
where 
\[ d(x,y) = \frac{1}{b^2}\int_{S_b(0)} |g(x+z) - g(y+z)|^2\, \dd z,\]
$S_a(x)$ is the square of side length $a$ centered at $x$, $a>0$, $b>0$, $h>0$, $\sigma>0$ are parameters, and 
\[Z(x) = \int_{S_a(x)} \exp(-(d(x,y)^2 - 2 \sigma^2)_+/h^2\sigma^2)\, \dd y \]
is a normalization. It produces good results and can be computed very quickly due to algebraic simplification \cite{DCCOJ, Froment}. It corresponds to the 'fast' setting of {\tt denoise\_nl\_means} from the scikit-image library. 

\begin{table}[ht] 
    \centering
    \caption{Comparison of different denoising methods: $\sigma^2 = 0.002$, $\varepsilon = 10^{-4}$}
    \label{tab:comparison0002,1e-4}
    \begin{tabular}{lcccc}
        \toprule
        & \multicolumn{4}{c}{Average values of metrics} \\
        \cmidrule(lr){2-5}
        Method & Time & SSIM $\uparrow$ & LPIPS $\downarrow$ & PSNR $\uparrow$ \\
        \midrule
        dpROF           & 0.06s + 0.99s & 0.8871 & 0.1240 & 31.45 \\
        ROF             & 0.27s        & 0.8823 & 0.1250 & 31.26 \\
        TGV             & 2.37s        & 0.8867 & 0.1131 & 31.37 \\
        NLM        & 1.07s        & 0.8592 & 0.1367 & 30.38 \\ 
        CLR & 1.24s        & 0.8789 & 0.1249 & 31.18 \\
        Huber-ROF       & 0.79s        & 0.8788 & 0.1266 & 31.00 \\
        dpROF-noisy     & 1.17s        & 0.8829 & 0.1262 & 31.32 \\
        dpROF-SSIM      & 0.06s + 1.09s & 0.8853 & 0.1249 & 31.42 \\
        dpROF-LPIPS     & 0.09s + 1.09s & 0.8861 & 0.1236 & 31.40 \\
        dpROF-edge      & 2.22s        & 0.8721 & 0.1431 & 30.33 \\
        \bottomrule
    \end{tabular}
\end{table}

\begin{table}[ht] 
    \centering
    \caption{Comparison of different denoising methods: $\sigma^2 = 0.01$, $\varepsilon = 10^{-4}$}
    \label{tab:comparison001,1e-4}
    \begin{tabular}{lcccc}
        \toprule
        & \multicolumn{4}{c}{Average values of metrics} \\
        \cmidrule(lr){2-5}
        Method & Time & SSIM $\uparrow$ & LPIPS $\downarrow$ & PSNR $\uparrow$ \\
        \midrule
        dpROF           & 0.07s + 1.26s & 0.7720 & 0.2618 & 27.38 \\
        ROF             & 0.42s        & 0.7619 & 0.2615 & 27.23 \\
        TGV             & 6.74s        & 0.7651 & 0.2559 & 27.27 \\
        NLM        & 0.99s        & 0.7527 & 0.2262 & 27.21 \\
        CLR & 8.25s        & 0.7584 & 0.2600 & 27.11 \\
        Huber-ROF       & 1.22s        & 0.7628 & 0.2591 & 27.19 \\
        dpROF-noisy     & 1.50s        & 0.7472 & 0.2758 & 27.20 \\
        dpROF-SSIM      & 0.12s + 1.43s & 0.7639 & 0.2605 & 27.25 \\
        dpROF-LPIPS     & 0.13s + 1.79s & 0.7698 & 0.2569 & 27.33 \\
        dpROF-edge      & 3.14s        & 0.7404 & 0.2960 & 26.75 \\
        \bottomrule
    \end{tabular}
\end{table}

\begin{table}[ht] 
    \centering
    \caption{Comparison of different denoising methods: $\sigma^2 = 0.04$, $\varepsilon = 10^{-4}$}
    \label{tab:comparison004,1e-4}
    \begin{tabular}{lcccc}
        \toprule
        & \multicolumn{4}{c}{Average values of metrics} \\
        \cmidrule(lr){2-5}
        Method & Time & SSIM $\uparrow$ & LPIPS $\downarrow$ & PSNR $\uparrow$ \\
        \midrule
        dpROF           & 0.09s + 1.67s  & 0.6412 & 0.3910 & 24.20 \\
        ROF            & 0.64s        & 0.6309 & 0.3993 & 24.00 \\
        TGV             & 6.95s        & 0.6337 & 0.4031 & 24.13 \\
        NLM        & 2.80s       & 0.5899  & 0.4283 & 23.36 \\
        CLR & 4.32s        & 0.5810 & 0.5213 & 22.73 \\
        Huber-ROF       & 1.72s        & 0.6339 & 0.3968 & 24.00 \\
        dpROF-noisy     & 1.96s        & 0.5827 & 0.4340 & 23.70 \\
        dpROF-SSIM      & 0.09s + 2.02s & 0.6432 & 0.4057 & 24.12 \\
        dpROF-LPIPS     & 0.17s + 2.37s & 0.6443 & 0.3958 & 24.18 \\
        dpROF-edge      & 4.31s        & 0.5900 & 0.4454 & 23.67 \\
        \bottomrule
    \end{tabular}
\end{table}

We compare the performance of optimized models by a single run on the 'test' subset of BSDS500, in terms of average values of SSIM, LPIPS, PSNR as well as execution time, with stopping condition that the difference between successive iterations is smaller than a given threshold $\varepsilon$. In the case of dpROF with initial ROF step, we report separately elapsed time for the two steps. In order to speed up the model, we choose threshold $\sqrt{\varepsilon}$ for the initial step (this has negligible effect on performance, see \cite{github}). We report the results for $\sigma^2 = 0.002, \ 0.01,\ 0.04$ in Tables \ref{tab:comparison0002,1e-4}-\ref{tab:comparison004,1e-4}. In addition to the simplified dpROF model with values of parameters determined by Table \ref{tab:dprof_par}, we include several other variants of dpROF. Here we summarize the included models: 
\begin{itemize} 
\item dpROF: Double-phase Rudin--Osher--Fatemi model with parameters given by Table \ref{tab:dprof_par}.

\item ROF: The classical Rudin--Osher--Fatemi model, optimized with respect to the parameter $\lambda$.

\item TGV: The total generalized variation model, optimized w.r.\ to parameters $\alpha_1, \alpha_2$.

\item NLM: The non-local means model, optimized w.r.\ to the parameter $h$. Patch size and range are chosen as in \cite{BCM2011}. 

\item CLR: The Chen--Levine--Rao model, optimized w.r.\ to $\lambda$, with $K$ and $\beta$ chosen as in \cite{CLR}. 

\item Huber-ROF: The Huber-ROF model, optimized w.r.\ to $\lambda$ and $\alpha$.

\item dpROF-noisy: The double-phase ROF model with the weight computed directly from the noisy datum. More precisely, in the definition of the weight in the dpROF model, instead of setting $w(x) = W(|\nabla \tilde{u}_{\rm ROF}|(x))$, we simply use $w(x) = W(|\nabla \tilde{g}|(x))$, i.e., the mollified noisy datum. We then use the same parameters as in the main version of dpROF.

\item dpROF-SSIM: The dpROF model, but instead of parameters $\lambda_0$, $\lambda$, $a$ and $b/a$ given by Table \ref{tab:dprof_par}, we use values corresponding to maximal average SSIM on BSDS500 for the given noise value (see Sections 2.1 and 2.2 in the 'comparison' supplement in \cite{github}).

\item dpROF-LPIPS: The dpROF model, but instead of parameters $\lambda_{0}$, $\lambda$, $a$ and $b/a$ given by Table \ref{tab:dprof_par}, we use values corresponding to maximal average LPIPS on BSDS500 for the given noise value (see Sections 2.1 and 2.2 in the 'comparison' supplement in \cite{github}).

\item dpROF-edge: The double-phase ROF model with the weight computed from an edge map obtained from the noisy datum using a standard edge-detection algorithm. In the results in the table we used the Canny edge detection \cite{Canny} (in the initial tests it proved superior to Sobel/Scharr edge detection for this purpose). We use the same parameters as in the main version of dpROF.
\end{itemize}

\begin{figure*}[!ht]
    \caption{Comparison of ROF, dpROF, TGV, NLM, CLR and Huber-ROF models across different noise levels (SSIM)}
    \label{fig:ssim_comparison_graph}
    \begin{tikzpicture}
    \begin{axis}[
        width=0.95\textwidth,
        height=0.64\textwidth,
        xmode=log, 
        ymode=linear, 
        xlabel={Noise variance $\sigma^2$},
        ylabel={SSIM},
        xmin=0.0008, xmax=0.083,
        ymin=0.48, ymax=0.94,
        xtick={0.001, 0.002, 0.004, 0.007, 0.01, 0.02, 0.04, 0.07},
        xticklabels={$0.001$, $0.002$, $0.004$, $0.007$, $0.01$, $0.02$, $0.04$, $0.07$},
        ytick={0.6, 0.7, 0.8, 0.9},
        yticklabels={$0.6$, $0.7$, $0.8$, $0.9$},
        grid=major, 
        legend pos=south west, 
        legend style={
            at={(0.02, 0.02)}, 
            anchor=south west, 
            draw=none, 
            font=\small
        },
        mark options={scale=1.5},
        every axis plot/.append style={thick}
    ]

    \addplot[
        color=red,
        mark=square*
    ] coordinates {
        (0.001, 0.9205)
        (0.002, 0.8872)
        (0.004, 0.8415)
        (0.007, 0.8013) 
        (0.01, 0.7721)
        (0.02, 0.7022)
        (0.04, 0.6410)
        (0.07, 0.5891)
    };
    \addlegendentry{dpROF}

    \addplot[
        color=blue,
        mark=triangle*
    ] coordinates {
        (0.001, 0.9175)
        (0.002, 0.8823)
        (0.004, 0.8366)
        (0.007, 0.7915)
        (0.01, 0.7570)
        (0.02, 0.6959)
        (0.04, 0.6303)
        (0.07, 0.5780)
    };
    \addlegendentry{ROF}

    \addplot[
        color=violet, 
        mark=pentagon*
    ] coordinates {
        (0.001, 0.9184)
        (0.002, 0.8868)
        (0.004, 0.8364)
        (0.007, 0.7820)
        (0.01, 0.7651)
        (0.02, 0.6924)
        (0.04, 0.6338)
        (0.07, 0.5788)
    };
    \addlegendentry{TGV}

    \addplot[
        color=green!60!black, 
        mark=diamond*
    ] coordinates {
        (0.001, 0.9034)
        (0.002, 0.8592)
        (0.004, 0.8407)
        (0.007, 0.7929)
        (0.01, 0.7527)
        (0.02, 0.6784)
        (0.04, 0.5899)
        (0.07, 0.5225)
    };
    \addlegendentry{NLM}


    \addplot[
        color=brown, 
        mark=oplus*
    ] coordinates {
        (0.001, 0.9156)
        (0.002, 0.8727)
        (0.004, 0.8357)
        (0.007, 0.7850)
        (0.01, 0.7584)
        (0.02, 0.6889)
        (0.04, 0.6287)
        (0.07, 0.5605)
    };
    \addlegendentry{CLR}

    \addplot[
        color=pink, 
        mark=oplus
    ] coordinates {
        (0.001, 0.9180)
        (0.002, 0.8787)
        (0.004, 0.8345)
        (0.007, 0.7930)
        (0.01, 0.7627)
        (0.02, 0.6990)
        (0.04, 0.6338)
        (0.07, 0.5809)
    };
   \addlegendentry{Huber-ROF}

    \end{axis}
    \end{tikzpicture}

     \caption{Comparison of ROF, dpROF, TGV, NLM, CLR and Huber-ROF models across different noise levels (LPIPS)}
     \label{fig:lpips_comparison_graph}
    \begin{tikzpicture}
    \begin{axis}[
        width=0.95\textwidth,
        height=0.64\textwidth,
        xmode=log, 
        ymode=linear, 
        xlabel={Noise variance $\sigma^2$},
        ylabel={LPIPS},
        xmin=0.0008, xmax=0.083,
        ymin=0.0, ymax=0.55,
        xtick={0.001, 0.002, 0.004, 0.007, 0.01, 0.02, 0.04, 0.07},
        xticklabels={$0.001$, $0.002$, $0.004$, $0.007$, $0.01$, $0.02$, $0.04$, $0.07$},
        ytick={0.1, 0.2, 0.3, 0.4, 0.5},
        yticklabels={$0.1$, $0.2$, $0.3$, $0.4$, $0.5$},
        grid=major, 
        legend pos=north west, 
        legend style={
            at={(0.02, 0.98)}, 
            anchor=north west, 
            draw=none, 
            font=\small
        },
        mark options={scale=1.5},
        every axis plot/.append style={thick}
    ]
    
    \addplot[
        color=red,
        mark=square*
    ] coordinates {
        (0.001, 0.0824)
        (0.002, 0.1238)
        (0.004, 0.1808)
        (0.007, 0.2262) 
        (0.01, 0.2615)
        (0.02, 0.3278)
        (0.04, 0.3911)
        (0.07, 0.4442)
    };
    \addlegendentry{dpROF}

    \addplot[
        color=blue,
        mark=triangle*
    ] coordinates {
        (0.001, 0.0832)
        (0.002, 0.1250)
        (0.004, 0.1777)
        (0.007, 0.2265)
        (0.01, 0.2604)
        (0.02, 0.3289)
        (0.04, 0.3986)
        (0.07, 0.4550)
    };
    \addlegendentry{ROF}

    \addplot[
        color=violet, 
        mark=pentagon*
    ] coordinates {
        (0.001, 0.0823)
        (0.002, 0.1131)
        (0.004, 0.1678)
        (0.007, 0.2335)
        (0.01, 0.2559)
        (0.02, 0.3377)
        (0.04, 0.4029)
        (0.07, 0.4572)
    };
    \addlegendentry{TGV}
    
    \addplot[
        color=green!60!black, 
        mark=diamond*
    ] coordinates {
        (0.001, 0.0898)
        (0.002, 0.1367)
        (0.004, 0.1385)
        (0.007, 0.1874)
        (0.01, 0.2262)
        (0.02, 0.3122)
        (0.04, 0.4283)
        (0.07, 0.5171)
    };
    \addlegendentry{NLM}


    \addplot[
        color=brown, 
        mark=oplus*
    ] coordinates {
        (0.001, 0.0842)
        (0.002, 0.1318)
        (0.004, 0.1763)
        (0.007, 0.2305)
        (0.01, 0.2601)
        (0.02, 0.3334)
        (0.04, 0.4033)
        (0.07, 0.4693)
    };
    \addlegendentry{CLR}

    \addplot[
        color=pink, 
        mark=oplus
    ] coordinates {
        (0.001, 0.0841)
        (0.002, 0.1267)
        (0.004, 0.1776)
        (0.007, 0.2289)
        (0.01, 0.2591)
        (0.02, 0.3276)
        (0.04, 0.3968)
        (0.07, 0.4526)
    };
    \addlegendentry{Huber-ROF}

    \end{axis}
    \end{tikzpicture}
\end{figure*}

}

\begin{figure*}[h] 
    \centering
    \includegraphics[width=0.9\linewidth]{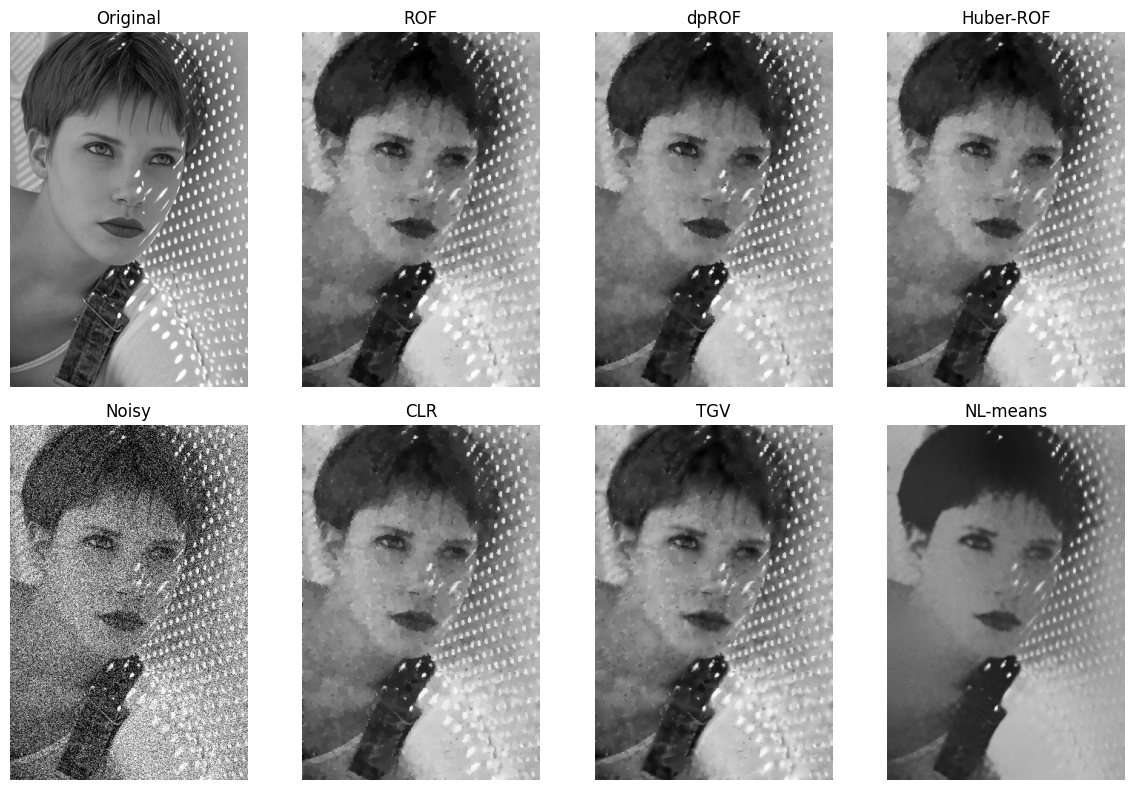}
    \caption{Original 'lady' image; image with added Gaussian noise with $\sigma^2=0.04$; optimized denoising results, respectively: ROF (${\rm SSIM} = 0.669$, ${\rm LPIPS} = 0.291$, $t = 15.1$ s), dpROF (${\rm SSIM} = 0.690$, ${\rm LPIPS} = 0.273$, $t = 0.7 + 6.7$ s), Huber-ROF (${\rm SSIM} = 0.675$, ${\rm LPIPS} = 0.284$, $t = 14.7$ s), CLR (${\rm SSIM} = 0.655$, ${\rm LPIPS} = 0.326$, $t = 97.2$ s), TGV (${\rm SSIM} = 0.663$, ${\rm LPIPS} = 0.304$, $t = 40.9$ s), NLM (${\rm SSIM} = 0.607$, ${\rm LPIPS} = 0.327$, $t = 11.4$ s).}
    \label{fig:lady_denoise}
\end{figure*}

We plot the values of SSIM and LPIPS for considered models in Figures \ref{fig:ssim_comparison_graph} and Figure \ref{fig:lpips_comparison_graph}. Double-phase ROF consistently outperforms other methods in terms of SSIM and PSNR. The values of LPIPS are comparable to other models, with dpROF taking clear lead for high levels of noise ($\sigma^2 = 0.04$ and larger). For low $\sigma$, the best results in terms of LPIPS are produced by TGV, while for medium $\sigma$ -- by NLM. 

Finally, we include for illustration (Figure \ref{fig:lady_denoise}) comparison of denoising results on one of images ('lady') from the BSDS500 dataset together with error maps (Figure \ref{fig:lady_error}). We also include Figure \ref{fig:lady_diff}, where the difference of errors between dpROF and other models is presented. Especially the latter explains qualitative differences between the models. The blue splotches across the areas of the lady's skin confirm that dpROF reduces staircasing better than other tested models. Incidentally, it also turns out to significantly outperform other models at restoring certain kinds of small-size structures. On the other hand, we observe that it underperforms in terms of recovering object contours.

\begin{figure*}[!h] 
 \centering
    
    \includegraphics[width=0.85\linewidth]{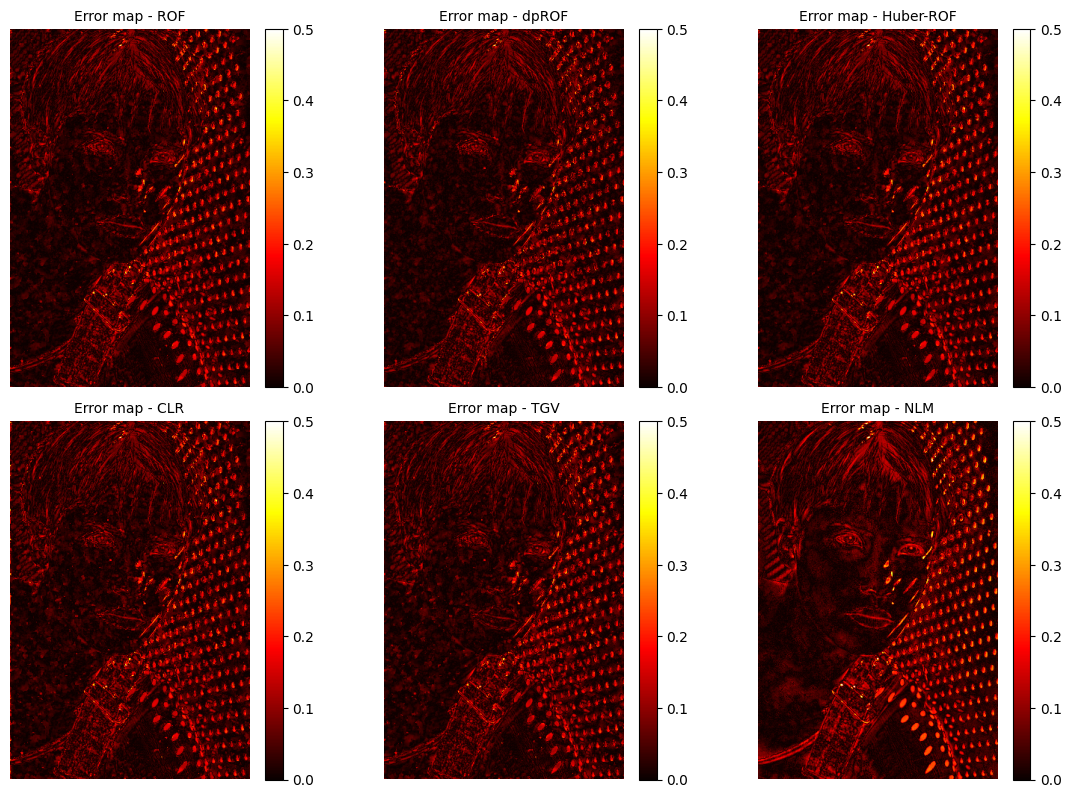}
    \caption{Error maps for the 'lady' image denoising example.}
    \label{fig:lady_error}

    \includegraphics[width=0.85\linewidth]{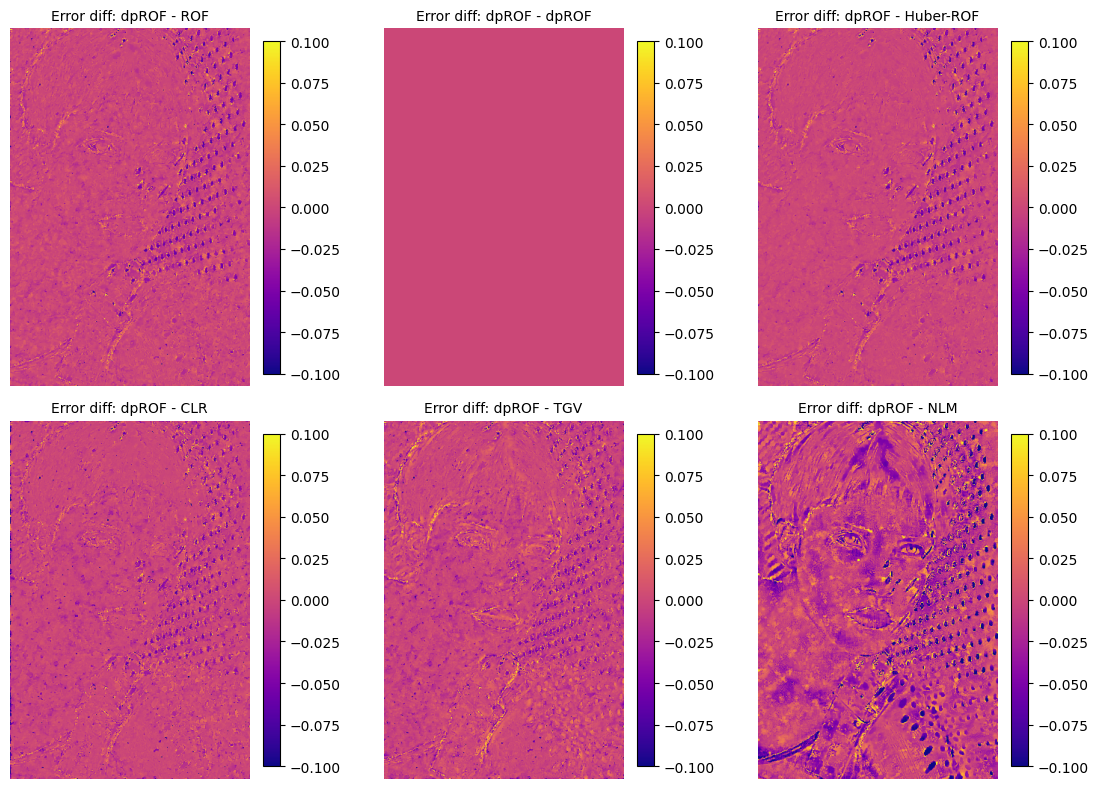}
    \caption{Difference in error between dpROF and other models (blue -- dpROF performs better, yellow -- other model performs better). }
    \label{fig:lady_diff}
\end{figure*}

\section{Conclusions}

The adaptive double-phase ROF model that we introduced was capable of significantly reducing the staircasing effect compared to the classical ROF as well as the Huber-ROF variant. At the same time, it maintained comparable or better values of similarity metrics, and computational cost comparable to the Huber-ROF model.

{In quantitative tests, dpROF consistently outperformed ROF and Huber-ROF, as well as TGV, Chen--Levine--Rao and non-local means in terms of SSIM and PSNR, while maintaining comparable values of LPIPS. The relative edge of dpROF was most pronounced for high levels of noise. }

Our results also suggest that using the ROF minimizer to construct the regularizer instead of the mollified input might improve the performance of other adaptive models. Moreover, it seems likely that using the Huber-TV integrand in place of the TV integrand as the basis of the adaptive regularizer could yield better results. We leave testing these ideas to possible future work. 

\noindent {\bf \flushleft Acknowledgments.}
{The authors would like to express their sincere gratitude to the anonymous reviewers for their helpful comments and suggestions.}

The first author would like to thank Ilaria Perugia and Monica Nonino for helpful discussions. 

The test images {used in Section \ref{sec:qual}} are: 
\begin{itemize} 
\item 'carygrant': a frame with the actor Cary Grant from the film \emph{His Girl Friday} (1940); in public domain in the US since 2024 \cite{wikicary}; 
\item 'girlface': a classical test image, presumed public domain; 
\item 'schnitzel': a photograph of a Wiener schnitzel with lemon, lamb's lettuce and potato salad, and cranberries; taken by M{\L}, public domain. 
\end{itemize}
512x512 px versions of the images were used. 

The code was implemented in Python 3.0. {The quantitative comparison in Section \ref{sec:comparison} was carried out locally on a computer with the following parameters: AMD Ryzen 9 9900X processor (4.4GHz) and 64GB DDR5 RAM (6000MHz, CL30). Other computations were performed using Google Colab. }

\noindent {\bf \flushleft Funding.} The work of the first author has been supported by the Austrian Science Fund (FWF), grants 10.55776/ESP88 and 10.55776/I5149. The second author has been partially supported by the grant 2024/55/D/ST1/03055 of the National Science Centre (NCN), Poland. The third author has been partially supported by the Special Account for Research Funding of the National Technical University of Athens. For the purpose of open access, the authors have applied a CC BY public copyright license to any Author Accepted Manuscript version arising from this submission.

\bibliographystyle{asdfgh}
\bibliography{bib.bib}

\end{document}